\numberwithin{equation}{section}
\newtheorem{Theorem}{Theorem}[section]
\newtheorem{Proposition}[Theorem]{Proposition}
\newtheorem{Corollary}[Theorem]{Corollary}
\newtheorem{Claim}[Theorem]{Claim}
\newcommand{\id}{\mbox{id}}
\newcommand{\F}{\mathscr{F}}
\newcommand{\A}{\mathscr{A}}
\newcommand{\Ho}{\mathcal{H}}
\newcommand{\hs}[1]{\hspace{#1cm} }
\newcommand{\tree}[5]{\hspace{#1cm} \raisebox{#2cm}{\includegraphics[scale=#3]{#4-crop.pdf}} \hspace{#5cm} }
\newcommand{\graph}[5]{\hspace{#1cm} \raisebox{#2cm}{\includegraphics[scale=#3]{#4.pdf}} \hspace{#5cm} }
\begin{document}

\title{Avoidance of a Landau Pole by Flat Contributions in QED}

\author{Lutz Klaczynski}
\address{Department of Physics, Humboldt University Berlin, 12489 Berlin}
\email{klacz@mathematik.hu-berlin.de}

\author{Dirk Kreimer}
\address{Alexander von Humboldt Chair in Mathematical Physics, Humboldt University Berlin, 12489 Berlin}
\thanks{The second author is supported by the Alexander von Humboldt Foundation and the BMBF}
\email{kreimer@mathematik.hu-berlin.de}

\begin{abstract}
We consider massless Quantum Electrodynamics in momentum scheme and carry forward an approach based on Dyson--Schwinger equations to 
approximate both the $\beta$-function and the renormalized photon self-energy \cite{Y11}. Starting from the Callan-Symanzik equation, 
we derive a renormalization group (RG) recursion identity which implies a non-linear ODE for the anomalous dimension and extract a 
sufficient but not necessary criterion for the existence of a Landau pole. This criterion implies a necessary condition for QED to 
have no such pole. Solving the differential equation exactly for a toy model case, we integrate the corresponding 
RG equation for the running coupling and find that even though the $\beta$-function entails a Landau pole it exhibits a flat 
contribution capable of decreasing its growth, in other cases possibly to the extent that such a pole is avoided altogether. 
Finally, by applying the recursion identity, we compute the photon propagator and investigate the effect of flat contributions on 
both spacelike and timelike photons. 
\end{abstract}

\maketitle

\section{Introduction: non-perturbative contributions}

We shall use the customary fine-structure constant $\alpha$ as coupling parameter. Consider the transversal part of the full renormalized
photon propagator
\begin{equation}\label{Pro}
\Pi_{\mu \nu}(q) = \frac{g_{\mu \nu} - q_{\mu}q_{\nu}/q^{2}}{q^{2}[1-\Sigma(\alpha,-q^{2}/\mu^{2})]} \ ,
\end{equation}
in massless QED with Minkowski metric $g_{\mu \nu}$ in mostly minus signature and renormalization point $\mu^{2}$ in momentum subtraction 
scheme. Let $L=\ln(-q^{2}/\mu^{2})$ be the Minkowksi momentum parameter and 
\begin{equation}\label{Green}
G(\alpha,L)=1-\Sigma(\alpha,e^{L})= 1 - \sum_{k=1}^{\infty}\gamma_{k}(\alpha)L^{k},
\end{equation}
the form factor in the denominator of (\ref{Pro}) which we shall refer to as \emph{Green's function} (of the photon) henceforth.
This function satisfies the renormalization condition $G(\alpha,0)=1$ at the subtraction point $q^{2}=-\mu^{2}$.
It is the object of this publication to find an approximation to this function as well as the corresponding $\beta$-function defined by
\begin{equation}
 \beta(\alpha):=\alpha \gamma(\alpha),
\end{equation}
where $\gamma(\alpha):=\gamma_{1}(\alpha)$ is  the first log-coefficient function in (\ref{Green}) known as \emph{anomalous dimension},
and discuss how instantonic contributions (see below) may help avert a Landau pole by sufficiently hampering its growth.

We briefly describe how this paper is organized and thereby sketch the line of argument for our approach. First, we will review and for 
clarity rederive an earlier
result by \cite{Y11} in Section \ref{secRG} that the \emph{Callan-Symanzik equation} implies the singular non-linear ODE
\begin{equation}\label{ODE1}
 \gamma(\alpha) + \gamma(\alpha)(1-\alpha \partial_{\alpha})\gamma(\alpha) = P(\alpha)
\end{equation}
for the anomalous dimension $\gamma(\alpha)$. We will deduce a sufficient but not necessary criterion for the existence of a Landau pole
from this equation which is equivalent to a \emph{necessary condition for QED to be free of such pole}. Our approach is to approximate $\gamma(\alpha)$ by using a perturbative expansion of
the function $P(\alpha)$ in the coupling $\alpha$. Section 2 explores the perturbative expansion of $P(\alpha)$ in terms of Feynman
diagrams where it becomes obvious that the power of the coupling $\alpha$ signifies the loop order. For a derivation of Eq.(\ref{ODE1}) 
from Dyson-Schwinger equations and their underlying structure, see \cite{Y11,Y13} and the references therein.

In this paper, however, we restrict ourselves to first order perturbation theory, where we 
approximate $P(\alpha) = c \alpha$ with $c>0$ and find a family of 'toy model' solutions 
\begin{equation}\label{LamberW}
 \gamma(\alpha)=c \alpha [1+W(\xi e^{-\frac{1}{c \alpha}})] ,
\end{equation}
indexed by $\xi$, a parameter which is determined by the initial condition for $\gamma(\alpha)$. The famous Lambert W function $W(x)$ 
constitutes the non-perturbative \emph{flat component} (see also \cite{BKUY09}).    
Just to remind the reader, a smooth function $f:(0,\infty) \rightarrow \mathbb{R}$ is called \emph{flat}
if it has a vanishing Taylor series at zero. 
The prime example well known to physicists is $f(\alpha)=\exp(-1/\alpha)$. Flat contributions of this type are in physics usually
referred to as \emph{instantonic}. Section \ref{secFlatCon} explores the mathematical intricacies of 'flat contributions' and what their 
properties have in store for the anomalous dimension as a solution of (\ref{ODE1}).  

Because the flat part we are interested in only comes to life when the coupling is large, we will see in Section \ref{secFoApp} that we 
cannot seriously treat the solution (\ref{LamberW}) as an approximation. Instead, we take it as an interesting toy model with a 
non-perturbative feature that is worthwhile studying for the following reason: it serves as
a guideline along which one can nicely investigate the impact of flat contributions. Two salient aspects are: 
\begin{enumerate}
 \item Though instantonic contributions never appear in standard perturbation theory due to their vanishing Taylor series, they may still
control the asymptotic behaviour of the $\beta$-function in such a way that the running coupling is free of a Landau pole. 
\item Both (toy model) Green's and $\beta$-function are, courtesy of the instantonic contribution, \emph{non-analytic}. According
to Dyson \cite{Dys51}, a feature to be expected for QED. 
\end{enumerate}
We quickly remind the reader of the argument Dyson put forward in favour of non-analyticity at the time \cite{Dys51}: 
if the Green's function were analytic, its Taylor expansion would also converge at some negative value of the coupling parameter. 
This value corresponds to a fictitious world in which the Coulomb potential would (in the classical limit) be attractive for like charges. In this 
scenario, he argues, there is a non-vanishing quantum mechanical chance for a 'pathological' state brought about by spontaneous 
particle-antiparticle creation (in the presence of an external field, that is). Because one can easily imagine a situation in which an 
increasing number of electrons accumulate in one region of space and their positron partners in another, a total disintegration of 
the vacuum is inevitable. This will then backfire on 
the Green's function, forcing it to diverge. Therefore, it must have poles on the negative real axis and cannot be analytic. 
Being aware that flat contributions do not cause a perturbation series to have a vanishing radius of convergence, we 
still consider them an interesting feature of our model. 

Note that from the viewpoint of perturbation theory, we restrict ourselves to the one-loop 
approximation: the one-loop vacuum polarization in QED has no internal photon propagator, and hence does not iterate itself. 
Still, the non-linearity of Eq.(\ref{ODE1}) demands a nontrivial flat contribution. 

We investigate in Section \ref{secLanAvoi} how this contribution alters the behaviour of the running coupling and may in other cases
very well affect its growth so as to avert a Landau pole. However, Section \ref{secToyLP} explains how the flat component cannot 
prevent but at least shift the Landau pole of our toy model towards a higher momentum scale.

If we choose a perturbative approximation 
of $P(\alpha)$ of higher polynomial degree, the conclusions of this paper remain valid as long as we can arrange for flat 
contributions which alter the large $\alpha$ behaviour suitably, see \cite{BKUY09} and below. Note that the perturbation series 
of $P(\alpha)$ is defined through the primitive elements of the Hopf algebra underlying the quantum field theory under discussion, 
and is a variant of the contribution from the skeleton diagrams, see \cite{Y11,Y13} for details. 

We present the resulting toy model photon self-energy in Section \ref{secSelf} and study the impact the flat contribution has on the Green's
function. Although we \emph{can} relate the perturbative series of the function $P(\alpha)$ to a skeleton diagram expansion, 
we \emph{cannot} find a canonical diagrammatic interpretation of our toy photon self-energy in terms of a \emph{resummation scheme} like 
in the case  of renormalon chain \cite{FaSi97,Ben99} or rainbow approximations \cite{DelKaTh96,DelEM97,KY06}. 
Our approach is of a fundamentally different nature: we take a non-perturbative equation, solved it perturbatively for the first loop 
order and yet get an instantonic contribution.   

\section{Renormalization group recursion and Landau pole criterion}\label{secRG}

The Callan-Symanzik equation imposes a \emph{recursion identity} on the log-coefficient functions of the Green's function in the 
form (\ref{Green}) and has the ODE (\ref{ODE1}) as an immediate consequence. We quickly rederive this result by \cite{Y11} and 
subsequently investigate what can be said about the possible existence of a Landau pole given the behaviour of the function $P(\alpha)$.
We take a slightly different view from that in \cite{BKUY09} where  
\begin{equation}\label{YLanCrit}
 \mathcal{L}(P)=\int_{x_{0}}^{\infty} \frac{2dx}{x(\sqrt{1+4P(x)}-1)}  < \infty \hs{2} (x_{0}>0)
\end{equation}
is found to be a necessary and sufficient criterion for the existence of a Landau pole. 

\subsection*{Recursion identity} 
Our starting point is the Callan-Symanzik equation for the Green's function 
\begin{equation}\label{Call}
 \left(-\frac{\partial}{\partial L} + \beta(\alpha) \frac{\partial}{\partial \alpha} - \gamma(\alpha) \right)G(\alpha,L)=0
\end{equation}
into which we insert the Green's function in the form (\ref{Green}) and find
\begin{Proposition}
 The log-coefficient functions $\gamma_{k}(\alpha)$ of the Green's function $ G(\alpha,L)=1 - \gamma \cdot L$ with shorthand
$\gamma \cdot L := \sum_{k\geq 1}\gamma_{k}(\alpha)L^{k}$ 
satisfy the renormalization group (RG) recursion \cite{Y11}
\begin{equation}\label{rec2}
 (k+1)\gamma_{k+1}(\alpha) = \gamma(\alpha)(\alpha \partial_{\alpha}-1)\gamma_{k}(\alpha)
\end{equation}
\end{Proposition}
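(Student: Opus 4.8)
The plan is to substitute the explicit series $G(\alpha,L)=1-\sum_{k\geq1}\gamma_k(\alpha)L^k$ into the Callan--Symanzik equation (\ref{Call}) and compare coefficients of equal powers of $L$, treating $G$ as a formal power series in $L$ so that the matching of coefficients is legitimate. The only structural input I expect to need beyond term-by-term differentiation is the definition $\beta(\alpha)=\alpha\gamma(\alpha)$, which is precisely what converts the plain $\alpha$-derivative into the operator $\alpha\partial_\alpha$ appearing in (\ref{rec2}).

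First I would compute the three pieces of the operator separately. Differentiating term by term gives $-\partial_L G=\sum_{k\geq1}k\,\gamma_k L^{k-1}$; reindexing $k\mapsto k+1$ rewrites this as $\sum_{k\geq0}(k+1)\gamma_{k+1}L^k$, which already produces the left-hand side of the claimed recursion as the coefficient of $L^k$. Next, using $\beta=\alpha\gamma$, I would write $\beta\,\partial_\alpha G=-\gamma(\alpha)\sum_{k\geq1}(\alpha\partial_\alpha\gamma_k)L^k$, so that the overall factor $\gamma(\alpha)$ and the operator $\alpha\partial_\alpha$ emerge naturally. Finally, $-\gamma(\alpha)G=-\gamma(\alpha)+\gamma(\alpha)\sum_{k\geq1}\gamma_k L^k$.

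Then I would add the three contributions, set the sum to zero, and read off the coefficient of $L^k$. For $k\geq1$ this gives
\[
(k+1)\gamma_{k+1}(\alpha)-\gamma(\alpha)\,\alpha\partial_\alpha\gamma_k(\alpha)+\gamma(\alpha)\gamma_k(\alpha)=0,
\]
which rearranges at once into (\ref{rec2}) after factoring out the common $\gamma(\alpha)$ and recognizing the combination $\alpha\partial_\alpha-1$. The coefficient of $L^0$ must be treated separately: it collects $+\gamma_1$ from the $-\partial_L G$ piece and $-\gamma$ from the $-\gamma G$ piece (the $\beta\,\partial_\alpha$ piece contributes nothing at order $L^0$), giving $\gamma-\gamma=0$ by $\gamma=\gamma_1$. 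This is merely a consistency check showing that the constant term is satisfied automatically.

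I do not expect a genuine obstacle here, since the content is bookkeeping rather than analysis. The two points that require care are the index shift in the $\partial_L$ term, so that the $L^0$ term is separated off correctly, and the legitimacy of equating coefficients, which I would justify by regarding (\ref{Green}) as a formal power series in $L$ (or, if convergence is assumed, by the uniqueness of Taylor coefficients). The one mildly clever observation is that $\beta\,\partial_\alpha-\gamma$ acting on the series collapses, after pulling out the common factor $\gamma(\alpha)$, into $\gamma(\alpha)(\alpha\partial_\alpha-1)$ acting on the single coefficient $\gamma_k(\alpha)$.
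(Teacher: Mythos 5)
Your proposal is correct and follows essentially the same route as the paper's proof: substitute the series $1-\gamma\cdot L$ into the Callan--Symanzik equation, use $\beta(\alpha)=\alpha\gamma(\alpha)$ to turn $\beta\,\partial_\alpha$ into $\gamma(\alpha)\,\alpha\partial_\alpha$, note that the $L^{0}$ terms cancel by $\gamma=\gamma_{1}$, and equate coefficients of $L^{k}$ for $k\geq 1$. The only cosmetic difference is that you compute the three operator pieces separately and flag the $L^{0}$ consistency check explicitly, while the paper absorbs both into a single displayed chain of equalities.
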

for all $k \geq 1$. 
\begin{proof}
If we plug $1 - \gamma \cdot L$ into the Callan-Symanzik RG equation (\ref{Call}), we find
\begin{equation}
 \begin{split}
0 &= \sum_{k\geq 1}k\gamma_{k}L^{k-1} - \beta \gamma' \cdot L -\gamma(1 - \gamma \cdot L) 
= \sum_{k\geq 2} k\gamma_{k}L^{k-1} - \beta \gamma' \cdot L +  \gamma(\gamma \cdot L)   \\
&= \sum_{k \geq 1} [(k+1)\gamma_{k+1} - \beta \gamma'_{k} + \gamma\gamma_{k}  ] L^{k}  
=  \sum_{k \geq 1} [ (k+1)\gamma_{k+1} -  \gamma(\alpha \gamma'_{k} - \gamma_{k}) ] L^{k} .  \qedhere
 \end{split}
\end{equation}
\end{proof}
If we now define $P(\alpha):=\gamma_{1}(\alpha)-2 \gamma_{2}(\alpha)$ and employ (\ref{rec2}) for $k=1$, we get the non-linear ODE
\begin{equation}\label{ODE}
 \gamma(\alpha) + \gamma(\alpha)(1-\alpha \partial_{\alpha})\gamma(\alpha) = P(\alpha)
\end{equation}
for the \emph{anomalous dimension} $\gamma(\alpha)=\gamma_{1}(\alpha)$ of the photon, where the definition of $P(\alpha)$ was originally inspired by a study 
of the photon's Dyson-Schwinger equation (see \cite{Y11}). It is this very context in which the perturbative series of $P(\alpha)$ can be
given a diagrammatic interpretation in terms of skeleton diagrams \cite{Y11,Y13}. 

\subsection*{Landau pole} So what use is this equation? Firstly, we shall now see that it implies a sufficient criterion for the existence of a Landau pole which is 
equivalent to a necessary condition for QED to have no such pole. 
Secondly, it suggests new possibilities on the non-perturbative frontline.
What do we know about $P(\alpha)$?  
First note that $\beta(\alpha)>0$, and hence $P(\alpha)>0$ for small $\alpha >0$ by how it is defined. This
function may have zeros: at a point $\alpha_{0} \in (0,\infty)$, where $P(\alpha_{0})=0$ we see that by
\begin{equation}\label{Palpha}
0=P(\alpha_{0})= \gamma(\alpha_{0})[1 + \gamma(\alpha_{0})-\alpha_{0} \gamma'(\alpha_{0})] 
\end{equation}  
we have either $\gamma(\alpha_{0})=0$ and thus $\beta(\alpha_{0})=0$ or
\begin{equation}\label{diff}
 1 + \gamma(\alpha_{0})-\alpha_{0} \gamma'(\alpha_{0})=0.
\end{equation}
We exclude the possibility that $P(\alpha)$ has an infinite number of zeros and compare the following two assumptions from a 
physical point of view:
\begin{itemize}
\item[(H1)] \emph{$P(\alpha)$ has no nontrivial zero, i.e. no zero other than $\alpha_{0}=0$}.
\item[(H2)] \emph{The anomalous dimension $\gamma(\alpha)$ has no nontrivial zero whereas $P(\alpha)$ does have a finite number of zeros}.
\end{itemize}
Notice that (H1) implies $\gamma(\alpha)>0$ for all $\alpha > 0$. The next two propositions will help us to decide which of these 
assumptions is stronger.  

\begin{Proposition}[Asymptotics of anomalous dimension I]\label{Prop:Asy1}
Assume $P(\alpha)$ vanishes nowhere except at the origin. Then there exist a constant $A>0$ and a point $\alpha^{*}\in \mathbb{R}_{+}$ 
such that 
\begin{equation}\label{asym1}
 \gamma(\alpha) < A \alpha - 1 
\end{equation}
for all $\alpha > \alpha^{*}$.  
\end{Proposition}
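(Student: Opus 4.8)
The plan is to convert the nonlinear ODE (\ref{ODE}) into a one-sided differential inequality and then recognize that inequality as a monotonicity statement for a well-chosen quotient. The first ingredient is strict positivity. Under hypothesis (H1) both $P$ and $\gamma$ are strictly positive on all of $(0,\infty)$: for $P$ this holds because it is positive for small $\alpha>0$, is continuous, and has no nontrivial zero, so by the intermediate value theorem it cannot change sign; for $\gamma$ it is the observation already recorded in the text, namely that if $\gamma$ vanished at some $\alpha_{0}>0$ then (\ref{Palpha}) would force $P(\alpha_{0})=0$, contradicting (H1), while $\gamma>0$ for small $\alpha$ because $\beta=\alpha\gamma>0$ there.

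Given positivity, I would solve (\ref{ODE}) for the bracketed factor. Since $\gamma(\alpha)>0$, dividing yields
$$1 + \gamma(\alpha) - \alpha\gamma'(\alpha) = \frac{P(\alpha)}{\gamma(\alpha)} > 0,$$
equivalently $\alpha\gamma'(\alpha) < 1 + \gamma(\alpha)$ for all $\alpha>0$. The key step is to read this sign condition not as a statement about $\gamma$ directly but as a monotonicity statement for the quotient $(\gamma+1)/\alpha$:
$$\frac{d}{d\alpha}\left(\frac{\gamma(\alpha)+1}{\alpha}\right) = \frac{\alpha\gamma'(\alpha) - \gamma(\alpha) - 1}{\alpha^{2}} < 0,$$
so that $(\gamma(\alpha)+1)/\alpha$ is strictly decreasing on $(0,\infty)$.

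The conclusion then follows immediately. I would fix any $\alpha^{*}\in\mathbb{R}_{+}$ and set $A:=(\gamma(\alpha^{*})+1)/\alpha^{*}$, which is positive because $\gamma(\alpha^{*})>0$. Strict monotonicity then gives, for every $\alpha>\alpha^{*}$,
$$\frac{\gamma(\alpha)+1}{\alpha} < \frac{\gamma(\alpha^{*})+1}{\alpha^{*}} = A,$$
and rearranging produces $\gamma(\alpha) < A\alpha - 1$, exactly as claimed.

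The calculation is short, so the genuine work here is conceptual rather than computational: the main point is to notice that the positivity supplied by (H1) collapses the full nonlinear ODE into the single sign inequality $\alpha\gamma' < 1+\gamma$, and that this inequality is precisely the statement that the shifted quantity $(\gamma+1)/\alpha$ decreases, which is what yields a clean linear-in-$\alpha$ upper bound with an explicit constant. The only steps that require care are establishing the \emph{global} positivity of $P$ on $(0,\infty)$ (as opposed to merely near the origin) and invoking the differentiability of $\gamma$; both I would take as part of the standing regularity assumptions on solutions of (\ref{ODE}).
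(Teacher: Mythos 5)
Your proof is correct, and it reaches the paper's conclusion by a cleaner route than the paper itself. Both arguments hinge on the same key inequality: under (H1) positivity of $P$ and $\gamma$ lets one read off from the factorization $P=\gamma\,[1+\gamma-\alpha\gamma']$ in (\ref{Palpha}) that $\alpha\gamma'(\alpha)<1+\gamma(\alpha)$ on $(0,\infty)$, and both proofs pick an arbitrary $\alpha^{*}$ with the same constant $A=[1+\gamma(\alpha^{*})]/\alpha^{*}$. Where you diverge is in how the inequality is exploited. The paper runs a barrier argument by contradiction: it compares $\gamma$ with the line $h_{A}(\alpha)=A\alpha-1$, notes $h=\gamma-h_{A}$ vanishes at $\alpha^{*}$ with $h'(\alpha^{*})<0$, and rules out a later sign change by observing that a sign change would force an interior point $\overline{\alpha}$ with $h(\overline{\alpha})<0$, $h'(\overline{\alpha})=0$, contradicting the differential inequality. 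You instead recognize $\alpha\gamma'<1+\gamma$ as precisely the statement that $(\gamma(\alpha)+1)/\alpha$ is strictly decreasing, which yields the bound in one line and makes it transparent why \emph{any} choice of $\alpha^{*}$ works. Your version is shorter and tighter: it sidesteps the paper's slightly informal step that a sign change of $h$ entails an interior critical point with $h<0$ (true, via a negative-minimum argument on the interval up to the first zero, but left implicit). What the paper's comparison-function technique buys is a reusable template: the same barrier setup is repeated verbatim, with inequalities reversed, in Proposition \ref{Prop:Asy2} — though it is worth noting your monotone-quotient observation adapts there just as well, since $P<0$ for $\alpha>\alpha^{*}$ makes $(\gamma+1)/\alpha$ strictly increasing on $(\alpha^{*},\infty)$. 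One small point of care you handled correctly and the paper leaves tacit: dividing by $\gamma$ requires the global positivity $\gamma>0$ on $(0,\infty)$, which you justified from (\ref{Palpha}) plus positivity near the origin rather than merely asserting it.
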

\begin{proof}
Pick any $\alpha^{*} \in (0, \infty)$ and set $A:=[1+\gamma(\alpha^{*})]/\alpha^{*}$. By (\ref{Palpha}), the assumptions imply 
\begin{equation}\label{deriv1}
\forall \alpha \in \mathbb{R}_{+} : \hs{0.1}  \gamma'(\alpha) < \frac{1+\gamma(\alpha)}{\alpha} \hs{0.2}  
\Rightarrow \hs{0.2} \gamma'(\alpha^{*}) < \frac{1+\gamma(\alpha^{*})}{\alpha^{*}} = A  
\end{equation}
and thus $h_{A}(\alpha):= A \alpha -1$ is the line that meets $\gamma(\alpha)$ but has stronger growth at the point
$\alpha=\alpha^{*}$. 
Hence $h(\alpha) := \gamma(\alpha)-h_{A}(\alpha)$ satisfies $h(\alpha^{*})=0$ and $h'(\alpha^{*})<0$. Consequently, there is an 
interval $I_{\varepsilon}(\alpha^{*}):=(\alpha^{*},\alpha^{*}+\varepsilon)$ such that $h(\alpha)<0$ for all 
$\alpha \in I_{\varepsilon}(\alpha^{*})$. For a sign change of $h(\alpha)$
there must be a point $\overline{\alpha}>\alpha^{*}$ where $h(\overline{\alpha})<0$ and $h'(\overline{\alpha})=0$ which implies the 
contradiction
\begin{equation}
 0= h'(\overline{\alpha}) = \gamma'(\overline{\alpha}) - A < \frac{1+\gamma(\overline{\alpha})}{\overline{\alpha}} 
- A = \frac{h(\overline{\alpha})}{\overline{\alpha}}. 
\end{equation}
\end{proof}
Note that the asymptotics of (\ref{asym1}) does not touch on the Landau pole question of QED: the growth of the $\beta$-function may or may not
be strong enough for a Landau pole to exist. Regarding the second assumption (H2), we will see that we need the extra property that 
$P(\alpha)>0$ for large enough $\alpha \in \mathbb{R}_{+}$ to not have a Landau pole enforced.  

\begin{Proposition}[Asymptotics of anomalous dimension II]\label{Prop:Asy2}
Assume $\gamma(\alpha)$ vanishes nowhere other than at the origin and $P(\alpha)$ has a finite number of zeros such that it becomes 
negative for sufficiently large $\alpha$, i.e. there is an $\alpha^{*} \in \mathbb{R}_{+}$ with
$P(\alpha)<0$ for all $\alpha > \alpha^{*}$. Then there exists a constant $A>0$ such that 
\begin{equation}\label{asym2}
 \gamma(\alpha) > A \alpha - 1 
\end{equation}
for all $\alpha > \alpha^{*}$ and QED has a Landau pole. 
\end{Proposition}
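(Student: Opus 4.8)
The plan is to reproduce the barrier argument of Proposition~\ref{Prop:Asy1} with every inequality reversed, and then to turn the resulting linear \emph{lower} bound on $\gamma$ into the convergence of the integral that governs the running coupling. First I would fix the sign of $\gamma$: since $\beta(\alpha)=\alpha\gamma(\alpha)>0$ for small $\alpha>0$ we have $\gamma>0$ near the origin, and as $\gamma$ is assumed to have no zero on $(0,\infty)$, continuity forces $\gamma(\alpha)>0$ throughout. Combining this with the hypothesis $P(\alpha)<0$ for $\alpha>\alpha^{*}$ in the factorization (\ref{Palpha}) gives $1+\gamma(\alpha)-\alpha\gamma'(\alpha)<0$, i.e.
\[
 \gamma'(\alpha) > \frac{1+\gamma(\alpha)}{\alpha} \qquad (\alpha>\alpha^{*}),
\]
which is precisely the reverse of (\ref{deriv1}).

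Next I would set $A:=[1+\gamma(\alpha^{*})]/\alpha^{*}$, so that the line $h_{A}(\alpha)=A\alpha-1$ passes through $(\alpha^{*},\gamma(\alpha^{*}))$, and put $h:=\gamma-h_{A}$. Then $h(\alpha^{*})=0$ while the displayed inequality at $\alpha^{*}$ gives $h'(\alpha^{*})=\gamma'(\alpha^{*})-A>0$, so $h>0$ just to the right of $\alpha^{*}$. The key step is a barrier argument ruling out any re-crossing: if $h$ returned to zero, let $\overline{\alpha}>\alpha^{*}$ be the first such point; then $h(\overline{\alpha})=0$ yields $(1+\gamma(\overline{\alpha}))/\overline{\alpha}=A$, while a first return from above forces $h'(\overline{\alpha})\le 0$. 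But the differential inequality gives $h'(\overline{\alpha})=\gamma'(\overline{\alpha})-A>(1+\gamma(\overline{\alpha}))/\overline{\alpha}-A=0$, a contradiction. Hence $h>0$ for all $\alpha>\alpha^{*}$, which is the claimed bound (\ref{asym2}).

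Finally I would deduce the Landau pole. The running coupling solves $d\alpha/dL=\beta(\alpha)=\alpha\gamma(\alpha)>0$, so it increases monotonically and reaches infinity at the finite scale $L_{0}+\int_{\alpha_{0}}^{\infty} d\alpha/(\alpha\gamma(\alpha))$ exactly when this integral converges. Using (\ref{asym2}), for $\alpha>\max(\alpha^{*},1/A)$ one has $\alpha\gamma(\alpha)>\alpha(A\alpha-1)>0$, whence
\[
 \int^{\infty}\frac{d\alpha}{\alpha\gamma(\alpha)} < \int^{\infty}\frac{d\alpha}{\alpha(A\alpha-1)} < \infty,
\]
the comparison integrand behaving like $1/(A\alpha^{2})$ at infinity. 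Thus the coupling blows up at a finite $L$ and QED has a Landau pole.

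The hard part will be the barrier step: one must invoke the \emph{strict} inequality $h'(\overline{\alpha})>0$ at the hypothetical first return to exclude re-crossings, exactly as in Proposition~\ref{Prop:Asy1} but with the sign of $P$ reversed. The only other point needing care is the global positivity of $\gamma$, which rests on pairing the no-nontrivial-zero hypothesis with $\beta>0$ at small coupling; everything past the linear bound is a routine comparison for the coupling integral.
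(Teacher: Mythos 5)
Your proposal is correct and follows essentially the same route as the paper: the sign of $P$ turns (\ref{Palpha}) into the differential inequality $\gamma'>(1+\gamma)/\alpha$, the line $h_{A}(\alpha)=A\alpha-1$ with $A=[1+\gamma(\alpha^{*})]/\alpha^{*}$ serves as a barrier that $\gamma$ cannot re-cross, and the Landau pole follows from the comparison $\int^{\infty}d\alpha/\beta(\alpha)\leq\int^{\infty}d\alpha/[\alpha(A\alpha-1)]<\infty$. The only cosmetic difference is that you derive the contradiction at the first return point (where $h=0$ and $h'\leq 0$), while the paper applies a Rolle-type argument at an interior point with $h>0$ and $h'=0$; both are the same barrier idea.
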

\begin{proof}
With $A$ defined as above, the assumptions imply
\begin{equation}\label{deriv2}
 \gamma'(\alpha^{*}) > \frac{1+\gamma(\alpha^{*})}{\alpha^{*}} = A
\end{equation}
and hence this time $h_{A}(\alpha)= A \alpha -1$ is the line that meets $\gamma(\alpha)$ at the point $\alpha=\alpha^{*}$ but 
is growing less there. 
Consequently, $h(\alpha) = \gamma(\alpha)-h_{A}(\alpha)$ satisfies $h(\alpha^{*})=0$ and $h'(\alpha^{*})>0$. For a sign change of $h(\alpha)$
there must be a point $\overline{\alpha}>\alpha^{*}$ where $h(\overline{\alpha})>0$ and $h'(\overline{\alpha})=0$ which implies the 
contradiction
\begin{equation}
 0= h'(\overline{\alpha}) = \gamma(\overline{\alpha}) - A > \frac{1+\gamma(\overline{\alpha})}{\overline{\alpha}} 
- A = \frac{h(\overline{\alpha})}{\overline{\alpha}}. 
\end{equation}
The second assertion holds because of
\begin{equation}
|\int_{\alpha^{*}}^{\infty} \frac{dx}{\beta(x)} | \leq |\int_{\alpha^{*}}^{\infty} \frac{dx}{x h_{A}(x)} | < \infty,
\end{equation}
where the first integral is discussed in Section \ref{secLanAvoi}.  
\end{proof}
In other words, on the assumption (H2), QED can only be free of a Landau pole if $P(\alpha)\geq 0$ for large enough $\alpha$.
One may therefore on purely physical grounds view (H1) to be less strong than (H2). 
With the latter Proposition and the definition of $P(\alpha)$, we arrive at

\begin{Corollary}[Landau pole criterion]
QED has a Landau pole if $\gamma_{1}(\alpha)<2\gamma_{2}(\alpha)$ for large enough $\alpha > 0$. A necessary condition for the 
non-existence of a Landau pole therefore is given by 
\begin{equation}\label{cond}
 \gamma_{1}(\alpha) \geq 2 \gamma_{2}(\alpha) \hs{1} \mbox{for sufficiently large $\alpha$} ,
\end{equation}
i.e. the second log-coefficient function must not win out over the first. 
\end{Corollary}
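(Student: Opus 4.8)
The plan is to obtain the corollary as a direct transcription of Proposition~\ref{Prop:Asy2} into the language of the log-coefficient functions. First I would recall the definition $P(\alpha) = \gamma_1(\alpha) - 2\gamma_2(\alpha)$, so that the inequality $\gamma_1(\alpha) < 2\gamma_2(\alpha)$ is simply $P(\alpha) < 0$. Reading ``for large enough $\alpha$'' as ``there is an $\alpha^{*}$ with the inequality holding for all $\alpha > \alpha^{*}$,'' this is precisely the eventual-negativity hypothesis $P(\alpha) < 0$ for $\alpha > \alpha^{*}$ appearing in Proposition~\ref{Prop:Asy2}. Under the standing assumption (H2)---that $\gamma(\alpha)$ has no nontrivial zero and $P(\alpha)$ has only finitely many zeros---the proposition then yields at once that QED has a Landau pole, which is the sufficient condition asserted by the corollary.

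For the necessary condition I would pass to the contrapositive: if QED has no Landau pole, then the sufficient condition cannot hold, so $P(\alpha) < 0$ must fail at arbitrarily large $\alpha$. The point requiring care here is that the bare negation of ``$P < 0$ eventually'' is merely ``$P \geq 0$ infinitely often,'' which is weaker than the clean statement $\gamma_{1}(\alpha) \geq 2\gamma_{2}(\alpha)$ for all large $\alpha$ that (\ref{cond}) demands. This gap is closed by the finite-zeros clause of (H2): since $P$ is continuous with only finitely many zeros, it has a definite sign beyond its largest zero, so at large $\alpha$ exactly one of $P > 0$ or $P < 0$ holds. With this dichotomy the failure of $P < 0$ at large scales upgrades to $P \geq 0$ for all sufficiently large $\alpha$, i.e. $\gamma_{1}(\alpha) \geq 2\gamma_{2}(\alpha)$, which is exactly (\ref{cond}).

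Since the substantive analytic work---the lower bound $\gamma(\alpha) > A\alpha - 1$ and the resulting convergence $|\int_{\alpha^{*}}^{\infty} dx/\beta(x)| < \infty$ forcing the running coupling to blow up at finite scale---has already been carried out inside Proposition~\ref{Prop:Asy2}, I do not expect any genuine computational obstacle to remain. The only delicate step is the logical bookkeeping just described: keeping (H2) in force so that the proposition applies, and invoking the sign dichotomy so that the contrapositive delivers the eventual inequality rather than only an infinitely-often one.
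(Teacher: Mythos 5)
Your proposal is correct and follows essentially the same route as the paper, which offers no separate proof and derives the corollary directly from Proposition~\ref{Prop:Asy2} together with the definition $P(\alpha)=\gamma_{1}(\alpha)-2\gamma_{2}(\alpha)$ under the standing assumption (H2). Your extra care in upgrading ``not eventually negative'' to ``eventually nonnegative'' via the finite-zeros clause of (H2) is a sound filling-in of bookkeeping the paper leaves implicit.
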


We shall see in Section \ref{secToyLP} that though our toy model adheres to the necessary condition (\ref{cond}), it does have a Landau pole.

\subsection*{Perturbative Expansion} To study the ODE in (\ref{ODE}) perturbatively, we expand $P(\alpha)$ in $\alpha$.
Let us see what the first coefficients are. Diagrammatically, the Green's function reads in terms of Feynman diagrams,  
\begin{equation}\label{GreenDiag}
\begin{split}
\tree{0.1}{-.2}{0.2}{Green}{0.1} = 1 - \tree{0.1}{-.2}{0.2}{1qedph1}{0.1} \alpha - (\tree{0.1}{-.2}{0.2}{1qedph}{0.1} +\tree{0.1}{-.2}{0.2}{1qedpho}{0.1} 
+ \tree{0.1}{-.2}{0.2}{1qedphu}{0.1}) \alpha^{2} - ... ,
\end{split}
\end{equation}
which, as an expansion, can be seen as a formal power series
\begin{equation}\label{formSeries}
 X(\alpha)=1-\sum_{k\geq 1}c_{k}\alpha^{k},
\end{equation}
with linear combinations $c_{k}$ of Feynman graphs as coefficients, i.e. in particular 
\begin{equation}
c_{1} = \tree{0.1}{-.2}{0.2}{1qedph1}{0.1} , \hs{2} c_{2} = \tree{0.1}{-.2}{0.2}{1qedph}{0.1} +\tree{0.1}{-.2}{0.2}{1qedpho}{0.1} 
+ \tree{0.1}{-.2}{0.2}{1qedphu}{0.1}
\end{equation}
and so on\footnote{The first term '$1$' corresponds to the bare propagator.}. Let now $\Ho$ be the free vector space spanned by all 1PI 
divergent photon propagator diagrams. Then the renormalized Feynman rules are 
characterized by linear maps $\phi_{L}:\Ho \rightarrow \mathbb{C}[L]$ that  evaluate a graph $\Gamma$ to a polynomial in the external 
momentum parameter $L$ of the form   
\begin{equation}
\phi_{L}(\Gamma) = \sum_{j=1}^{n} \sigma_{j}(\Gamma) L^{j},
\end{equation}
where $\sigma_{j}(\Gamma) \in \mathbb{C}$ is the coefficient of the $j$-th power of $L=\ln(-q^{2}/\mu^{2})$ and $n$ the degree of the
polynomial which is bounded above by the loop number of $\Gamma$. This implicitly defines linear maps $\sigma_{j}:\Ho \rightarrow \mathbb{C}$ with the property that $\sigma_{j}(\Gamma)=0$ 
for all $j > N$, if $\Gamma$ has $N$ loops. With these properties, the evaluation map $\phi_{L}$ can now be decomposed into
\begin{equation}\label{Decomp}
 \phi_{L} = \sum_{j=0}^{\infty}L^{j}\sigma_{j} = \sigma_{0} + L \sigma_{1} + L^{2} \sigma_{2} + ...
\end{equation}
where $\sigma_{0}(1)=1$ and $\sigma_{0}(\Gamma)=0$ for any Feynman graph. Note that for $j\geq 1$, one has $\sigma_{j}(1)=0$, i.e. the
bare propagator evaluates to the trivial polynomial $\phi_{L}(1)=1$. This is because we are dealing with Feynman rules for a form factor
here. If we apply the decomposition (\ref{Decomp}) to the formal series (\ref{formSeries}), we get the Green's function 
\begin{equation}
G(\alpha,L) = \phi_{L}(X(\alpha))= \sum_{j\geq 0}\sigma_{j}(X(\alpha))L^{j} = 1 + \sum_{j\geq 1}\sigma_{j}(X(\alpha))L^{j}
\end{equation}
in which we identify $\sigma_{j}(X(\alpha))=-\gamma_{j}(\alpha)$ for $j\geq 1$, i.e. by linearity of $\sigma_{j}$ we have the asymptotic
series
\begin{equation}
 \gamma_{j}(\alpha) = \sum_{k\geq 1} \sigma_{j}(c_{k})\alpha^{k} 
= \sum_{k\geq j} \sigma_{j}(c_{k})\alpha^{k}  =: \sum_{k \geq j} \gamma_{j,k} \alpha^{k},
\end{equation}
where $\gamma_{j,k}=\sigma_{j}(c_{k})$ is the $k$-th (asymptotic) Taylor coefficient of $\gamma_{j}(\alpha)$. 
Note that $c_{k} \in \Ho$ is in general a linear combination of $k$-loop vacuum polarization
graphs and hence $\sigma_{j}(c_{k})=0$ for $j > k$. 
This entails that the asymptotic expansion of $\gamma_{j}(\alpha)$ starts with the $j$-th coefficient $\gamma_{j,j}$ which is also
implied by the recursion in (\ref{rec2}). However, with these maps 
at hand, we see that the perturbative expansion of the function $P(\alpha)$ is given by
\begin{equation}
 P(\alpha)=\sum_{k\geq 1}[\sigma_{1}(c_{k})-2\sigma_{2}(c_{k})]\alpha^{k}. 
\end{equation}
Note that $\phi_{L}(c_{1})=\sigma_{1}(c_{1})L$ and $\phi_{L}(c_{2})=\sigma_{1}(c_{2})L + \sigma_{2}(c_{2})L^{2}$. From the results 
\begin{equation}
\phi_{L}(c_{1}) = \phi_{L}(\tree{0.1}{-.2}{0.2}{1qedph1}{0.1}) = \frac{L}{3\pi}, \hs{0.5} 
\phi_{L}(c_{2}) = \phi_{L}(\tree{0.1}{-.2}{0.2}{1qedph}{0.1} +\tree{0.1}{-.2}{0.2}{1qedpho}{0.1} 
+ \tree{0.1}{-.2}{0.2}{1qedphu}{0.1}) = \frac{L^{2}}{4\pi^{2}},
\end{equation}
found in \cite{GoKaLaSu91}, we read off the coefficients of $P(\alpha)$ to find
\begin{equation}\label{P}
 P(\alpha)= \frac{1}{3} \frac{\alpha}{\pi} - \frac{1}{2} \left(\frac{\alpha}{\pi}\right)^{2} + \mathcal{O}(\alpha^{3}) .
\end{equation}
Given the perturbative series of $P(\alpha)$, the ODE in (\ref{ODE}) determines the anomalous dimension $\gamma(\alpha)$ perturbatively:
let $u_{j}:=\gamma_{1,j}$ be the asymptotic coefficients of $\gamma(\alpha)$ and $r_{1}, r_{2}, ...$ those of $P(\alpha)$, 
then (\ref{ODE}) imposes
\begin{equation}\label{AsymRecursion}
 r_{k} = u_{k} + \sum_{l=2}^{k-1} (1-l) u_{l} u_{k-l}
\end{equation}
giving a nice recursion \cite{Y13}, 
\begin{equation}
 r_{1} = u_{1}\  , \hs{1} r_{2} = u_{2} \ , \hs{1} r_{3} = u_{3} - u_{2} u_{1} \ , \hs{0.2} ... \hs{0.2}
\mbox{and so on}.
\end{equation}
Though we know next to nothing about $P(\alpha)$, we assume it to be \emph{non-analytic} with a non-convergent asymptotic series 
which is Gevrey-1, that is, its Borel sum
\begin{equation}
B[P](\alpha) = \sum_{k\geq 1}\frac{r_{k}}{k!}\alpha^{k}
\end{equation}
has a non-vanishing radius of convergence. We furthermore expect the Borel sum to have an analytic continuation 
$\widetilde{B}[P](\alpha)$ such that $P(\alpha)$ equals the Borel-Laplace transform
\begin{equation}
P(\alpha) = \int_{0}^{\infty} dt \ e^{-t} \widetilde{B}[P](\alpha t)  
\end{equation}
up to flat contributions. Clearly, the recursion (\ref{AsymRecursion}) is blind to such contributions. Perhaps not surprisingly, 
\cite{BKUY10} have found an upper bound for the difference of two different solutions of (\ref{ODE}) in terms of a flat function:

\begin{Proposition}\label{KYB}
Let $P \in \mathscr{C}^{2}(\mathbb{R}_{+})$ be positive with $P(0^{+})=0$ and $P'(0^{+}) \neq 0$. Then two solutions $\gamma$ and
$\widetilde{\gamma}$ of the ODE (\ref{ODE}) differ by a flat function, more precisely,
\begin{equation}
 |\gamma(\alpha)-\widetilde{\gamma}(\alpha)| \leq E \alpha \exp(-F/\alpha),   \hs{2} \forall \alpha \in [0,\alpha_{0}]
\end{equation}
where the the constants $E,F>0$ depend on $\alpha_{0}>0$.
\end{Proposition}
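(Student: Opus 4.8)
The plan is to reduce the ODE to a form in which the difference of two solutions obeys a \emph{linear} equation, integrate that equation explicitly, and read off the flat factor from the singular behaviour of the integrating factor at the origin. Writing $\gamma' := \partial_{\alpha}\gamma$, equation (\ref{ODE}) reads $\alpha \gamma \gamma' = \gamma + \gamma^{2} - P$. Before treating the difference I would first pin down the behaviour of any admissible solution at the origin. Evaluating the ODE at $\alpha=0$ and using $P(0^{+})=0$ gives $0 = \gamma(0)(1+\gamma(0))$, so the physical branch has $\gamma(0)=0$; matching the next order then forces $\gamma'(0)=P'(0^{+})=:c$, which is strictly positive since $P>0$ near $0$ and $P'(0^{+})\neq 0$. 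Hence both $\gamma$ and $\widetilde{\gamma}$ satisfy $\gamma(\alpha)=c\alpha+o(\alpha)$, are positive on some $(0,\alpha_{0}]$, and $\gamma+\widetilde{\gamma}=2c\alpha+o(\alpha)$ there.

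Next I would subtract the two copies of $\alpha\gamma\gamma'=\gamma+\gamma^{2}-P$. Writing $\delta:=\gamma-\widetilde{\gamma}$ and $D:=\gamma^{2}-\widetilde{\gamma}^{2}=\delta(\gamma+\widetilde{\gamma})$, and using $\gamma\gamma'-\widetilde{\gamma}\widetilde{\gamma}'=\tfrac12 D'$, the $P$-terms cancel and one obtains $\tfrac{\alpha}{2}D'=\delta+D$. Substituting $\delta=D/(\gamma+\widetilde{\gamma})$ turns this into the \emph{linear, homogeneous} equation
\begin{equation}
\frac{D'}{D} = \frac{2}{\alpha} + \frac{2}{\alpha(\gamma+\widetilde{\gamma})},
\end{equation}
whose solution is $D(\alpha)=D(\alpha_{0})\exp\left(-\int_{\alpha}^{\alpha_{0}}\left[\tfrac{2}{s}+\tfrac{2}{s(\gamma(s)+\widetilde{\gamma}(s))}\right]ds\right)$ for $0<\alpha\leq\alpha_{0}$. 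The first term integrates to the harmless polynomial factor $(\alpha/\alpha_{0})^{2}$; the flatness comes entirely from the second term.

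The heart of the estimate is the integral $\int_{\alpha}^{\alpha_{0}}\tfrac{2\,ds}{s(\gamma+\widetilde{\gamma})}$. Using $\gamma+\widetilde{\gamma}=2cs(1+O(s))$ near the origin, the integrand equals $\tfrac{1}{cs^{2}}(1+O(s))$, so the integral is bounded below by $\tfrac{1}{c\alpha}-K\ln(1/\alpha)-K'$ for constants $K,K'$ depending on $\alpha_{0}$. This yields $|D(\alpha)|\leq C\,\alpha^{-K}\,e^{-1/(c\alpha)}$. Finally I would convert back via $|\delta|=|D|/(\gamma+\widetilde{\gamma})$, and since $\gamma+\widetilde{\gamma}\geq c\alpha$ on $(0,\alpha_{0}]$ for $\alpha_{0}$ small this gives $|\delta|\leq C'\alpha^{-K-1}e^{-1/(c\alpha)}$. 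Absorbing the negative power of $\alpha$ into the exponential at the cost of lowering the rate — using $\alpha^{-m}e^{-1/(c\alpha)}\leq C_{m}e^{-F/\alpha}$ for any $F<1/c$ on $[0,\alpha_{0}]$ — delivers the claimed bound $|\gamma-\widetilde{\gamma}|\leq E\alpha e^{-F/\alpha}$, with $F$ any number below $1/P'(0^{+})$ and $E$ depending on $\alpha_{0}$.

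The main obstacle I anticipate is the first step rather than the integration: rigorously justifying the near-origin asymptotics $\gamma+\widetilde{\gamma}=2c\alpha+o(\alpha)$ together with strict positivity on all of $(0,\alpha_{0}]$, since the ODE is singular at $\alpha=0$ and a priori a solution need only be $\mathscr{C}^{2}$ up to the boundary. Everything downstream — the linearization and the explicit exponential — is then routine, and the dependence of $E,F$ on $\alpha_{0}$ is precisely what allows the subleading polynomial and logarithmic corrections in the integrating factor to be absorbed.
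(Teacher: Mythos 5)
Your reduction is correct, and it is worth saying up front that the paper itself offers no proof to compare against: its ``proof'' of Proposition \ref{KYB} is a bare citation of Theorem 5.1 in \cite{BKUY10}. Your argument is therefore a genuine independent derivation, and its mechanism checks out. Subtracting the two copies of $\alpha\gamma\gamma'=\gamma+\gamma^{2}-P$ does cancel $P$; the combination $D:=\gamma^{2}-\widetilde{\gamma}^{2}$ does satisfy the exactly linear homogeneous equation $D'/D=2/\alpha+2/[\alpha(\gamma+\widetilde{\gamma})]$ (and if $D$ vanishes at one point, linearity forces $D\equiv 0$ on $(0,\alpha_{0}]$, hence $\gamma\equiv\widetilde{\gamma}$ by positivity, so the division is harmless); the $2/s$ term contributes only the factor $(\alpha/\alpha_{0})^{2}$; and a lower bound of the form $(1-\eta)/(c\alpha)-C$ on $\int_{\alpha}^{\alpha_{0}}2\,ds/[s(\gamma+\widetilde{\gamma})]$ yields the claimed flat bound for any $F<1/P'(0^{+})$. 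One simplification: you do not need the $O(s)$ correction you invoke. The weaker statement $\gamma+\widetilde{\gamma}=2cs(1+o(1))$ already gives the integrand as $(1-\eta)/(cs^{2})$ for $s$ small, costing only a factor $1-\eta$ in $F$, which the statement tolerates since $F$ is otherwise unconstrained; this also hands you the prefactor $\alpha$ directly, as $\alpha^{2}$ from the integrating factor divided by $\gamma+\widetilde{\gamma}\gtrsim\alpha$, with no negative powers to absorb.

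The genuine gap is the one you flag yourself, and it is not cosmetic: the near-origin control $\gamma(\alpha)=c\alpha+o(\alpha)$, together with positivity, is the technical heart of the cited Theorem 5.1, not a routine preliminary. Two points deserve emphasis. First, ``evaluating the ODE at $\alpha=0$'' presupposes that a solution on $(0,\alpha_{0}]$ extends continuously to $0^{+}$ and is differentiable there, which is not automatic for this singular equation; moreover the limit equation $0=\gamma(0)(1+\gamma(0))$ also admits the branch $\gamma(0^{+})=-1$, and a solution from that branch does \emph{not} differ flatly from one with $\gamma(0^{+})=0$ --- so the proposition is implicitly restricted to the physical branch, exactly as in \cite{BKUY10}, and your proof must state this restriction as a hypothesis rather than derive it. Second, the needed lemma is fillable by elementary means, e.g.: positivity propagates because at an interior zero of $\gamma$ the ODE would give $0=-P<0$; and setting $v:=\gamma/\alpha$ one finds
\begin{equation}
v\,v' \;=\; \frac{v-P(\alpha)/\alpha}{\alpha^{2}},
\end{equation}
on which a funnel (trapping) argument --- integrating $d(v^{2})/d\alpha\geq 2\varepsilon/\alpha^{2}$ backwards on any stretch where $v\geq c+\varepsilon$, and the mirror estimate below $c-\varepsilon$ --- forces $v\to c$ as $\alpha\downarrow 0$. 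With that lemma inserted, your proof is complete and self-contained, which is more than the paper provides; without it, the chain of estimates has no floor to stand on, since every bound downstream is calibrated by $\gamma+\widetilde{\gamma}\sim 2c\alpha$.
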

\begin{proof}
 See Theorem 5.1 in \cite{BKUY10} .
\end{proof}

To round off this section, we mention for the sake of completeness that a more general version of (\ref{ODE}) pertaining not just 
to QED has been studied in \cite{BKUY09} with the following result. On the assumption that $P(\alpha)$ is twice differentiable and 
strictly positive on $(0,\infty)$ with $P(0)=0$, a global solution exists iff $J(P):= \int_{x_{0}}^{\infty}dz P(z) z^{-3}$ converges for 
some $x_{0}>0$. If furthermore $P(\alpha)$ is everywhere increasing, then 
there exists a 'separatrix': a global solution that separates all global solutions from those existing only on a finite interval. 
We shall see in the next section how this latter situation arises in our approximation $P(\alpha)=c\alpha$ and ensures that the 
family of solutions (\ref{LamberW}) covers the whole set of solutions. Moreover, the separatrix picks out the very physical
solution that corresponds to a $\beta$-function whose growth is weakest among those of all other possible physical cases. Although it 
is not weak enough to avert a Landau pole, it may very well be in the case of the 'true' $P(\alpha)$. Because $P(\alpha)$ dictates the
behaviour of the separatrix, it strikes us to be an interesting object to scrutinize and maybe find an ODE for. The avoidance of a
Landau pole may then turn out to be simply encoded in the form of a boundary condition for $P(\alpha)$.

\section{First order non-analytic approximation}\label{secFoApp}

We set $c:=1/(3\pi)$ and choose $P(\alpha)=c \alpha$ for a first order approximation. Now, the ODE in (\ref{ODE}) takes the form
\begin{equation}\label{ode}
 \gamma(\alpha) + \gamma(\alpha)(1-\alpha \partial_{\alpha})\gamma(\alpha) = c \alpha. 
\end{equation}
This equation has already been studied in \cite{BKUY09} where the reader can find a direction field for the anomalous dimension 
$\gamma(\alpha)$. We shall review their results and expound them in somewhat more detail. 
It is an easy exercise to prove that 
\begin{equation}\label{anoDim}
\gamma(\alpha) = c\alpha[1 + W(\xi e^{-\frac{1}{c \alpha}})]
\end{equation}
provides a family of solutions with parameter $\xi:=(\gamma(1/c)-1)e^{\gamma(1/c)}$ which is fixed by the initial condition
for $\gamma(\alpha)$ at $\alpha=1/c$. This follows from
\begin{equation}
 \gamma(1/c)-1 = W(\xi e^{-1}) = \xi e^{-1} e^{-W(\xi e^{-1})} = \xi e^{-[1+W(\xi e^{-1})]} = \xi e^{-\gamma(1/c)},
\end{equation}
where we have used that the \emph{Lambert W function} $W(x)$ is defined as the inverse function of $x \mapsto x \exp(x)$ and 
therefore characterized by
\begin{equation}\label{Lampop}
 x = W(x)e^{W(x)}.
\end{equation}
This function has two branches, denoted by $W_{0}$ and $W_{-1}$ (see Figure \ref{LamFig}) and emerges in physics whenever the identity 
(\ref{Lampop}) may be exploited to solve a transcendental equation\footnote{See for example the QCD-related papers \cite{GaKaG98} 
and \cite{Nest03}.}.
\begin{figure}[ht] 
\includegraphics[height=7cm]{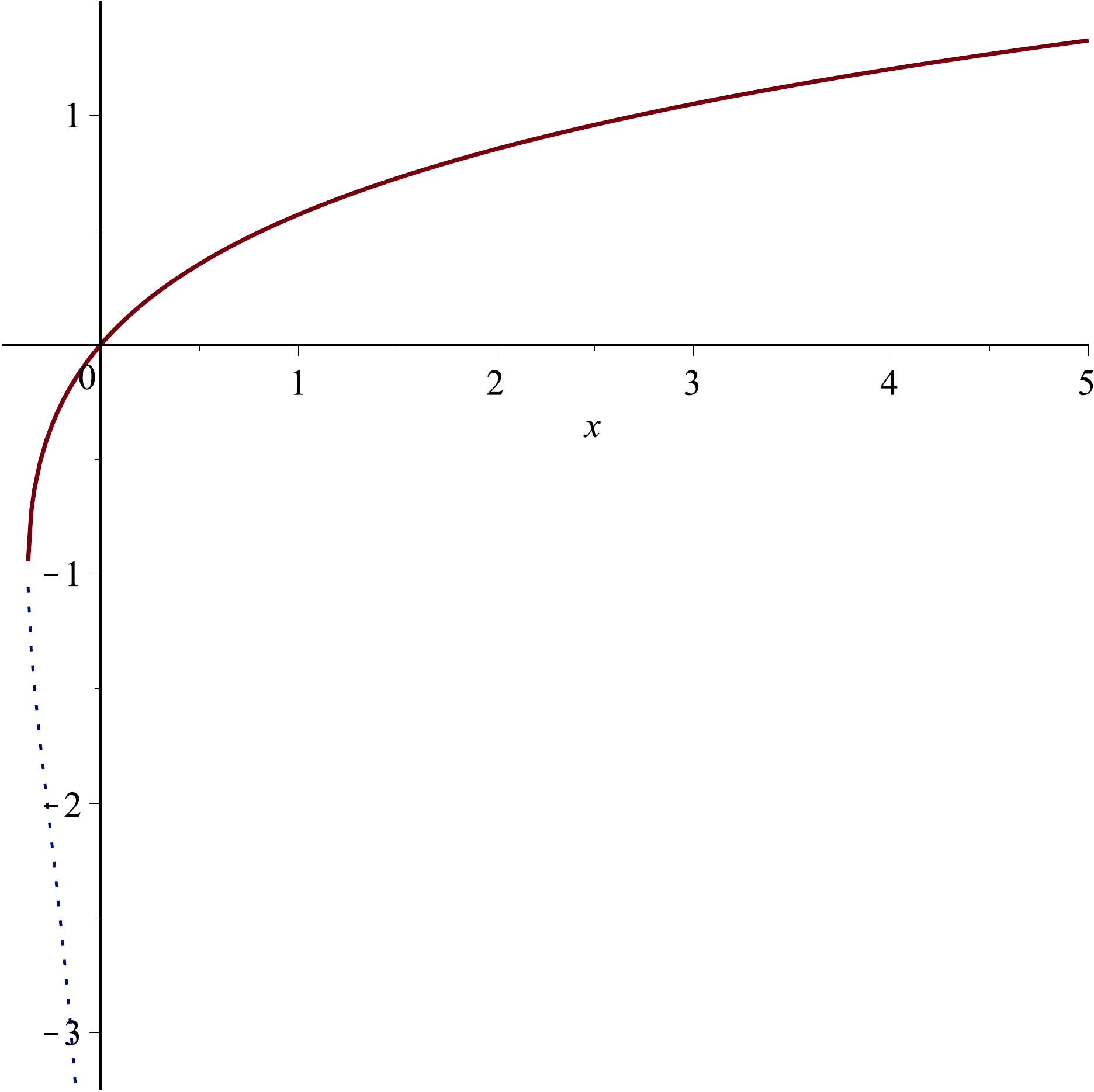}   \graph{-2.5}{1.5}{0.2}{Legend}{0}
\caption{ \small The two branches of the Lambert W function. Note that the second branch $W_{-1}$ (dotted line) is restricted to the 
interval $[-1/e,0)$ and vanishes nowhere.} \label{LamFig} \end{figure}
We shall ignore the second branch $W_{-1}(x)$, for the following reason: it is only defined on the half-open interval $[-1/e,0)$ and 
coerces us to choose $\xi<0$. On this interval, it rapidly drops down an abyss where one finds $W_{-1}(0^{-})=-\infty$. But although
it turns out that $\gamma(0^{+})=0$ in this branch, one finds $\gamma(\alpha)< 0$ for all couplings which entails 
$\beta(\alpha)=\alpha \gamma(\alpha) < 0$ for the $\beta$-function.
As this is not what we would call QED-like behaviour, we discard this branch. In contrast to this, we will see that the first 
branch $W_{0}(x)$ serves our purposes perfectly well. We will denote it by $W(x)$.  

Because our approximation does only hold for very small values of the coupling parameter $\alpha$
and, for example
\begin{equation}
 \frac{c}{137} W(\xi e^{-\frac{137}{c}}) \sim 10^{-562}
\end{equation}
with $\xi=-e^{-1}$, we shall scale away $c$ such that $c \alpha \rightarrow \alpha$ without renaming of functions and view all of the 
following results arising from $P(\alpha)=\alpha$ as those of an interesting toy model.  

\subsection*{$\beta$-function} The point $\xi_{*}:=-e^{-1}$ turns out to be critical for the $\beta$-function
\begin{equation}
 \beta(\alpha)= \alpha \gamma(\alpha) =  \alpha^{2}[1+W(\xi e^{-\frac{1}{\alpha}})] .
\end{equation}
This function has a nontrivial zero if we choose $\xi < \xi_{*}$ and no such zero otherwise: 
the only way the $\beta$-function can vanish at some point $\alpha_{0} \in (0,\infty)$ is when 
\begin{equation}
W(\xi e^{-\frac{1}{\alpha_{0}}})=-1 
\end{equation}
which by $x=W(x) e^{W(x)} = -1 e^{-1}$ implies 
\begin{equation}
 \xi e^{-\frac{1}{\alpha_{0}}} = -e^{-1}  \hs{1} \Rightarrow \hs{1} \alpha_{0} = \frac{1}{1+\ln |\xi|}.
\end{equation}
This point escapes into infinity for $\xi=\xi_{*}$ and reappears on the negative side of the real axis for $\xi > \xi_{*}$ 
where it is no longer of interest as a zero of the $\beta$-function. The choice $\xi=\xi_{*}$ corresponds to the initial condition 
\begin{equation}
 \gamma(1) = 1+W(-e^{-2})  \approx 0.841
\end{equation}
and characterizes the \emph{separatrix} $\beta$-function $\beta^{*}(\alpha)= \alpha \gamma^{*}_{1}(\alpha)$. Further inspection 
shows that all choices $\xi < \xi_{*}$
with a nontrivial zero $\alpha_{0}>0$ are unphysical: their solution $\beta(\alpha)$ simply ceases to exist at $\alpha_{0}$
and has a divergent derivative at this point, i.e. $\beta(\alpha_{0})=0$ and $\beta'(\alpha_{0})=-\infty$. We therefore conclude 
that only $\xi \geq \xi_{*}=-e^{-1}$ is physically permissible for further consideration. Note that 
the usual one-loop approximation for the $\beta$-function corresponds to the case $\xi=0$, which is also physical. 
Figure \ref{Beta} shows examples for different choices of $\xi$. 

\begin{figure}[ht] 
\includegraphics[height=7cm]{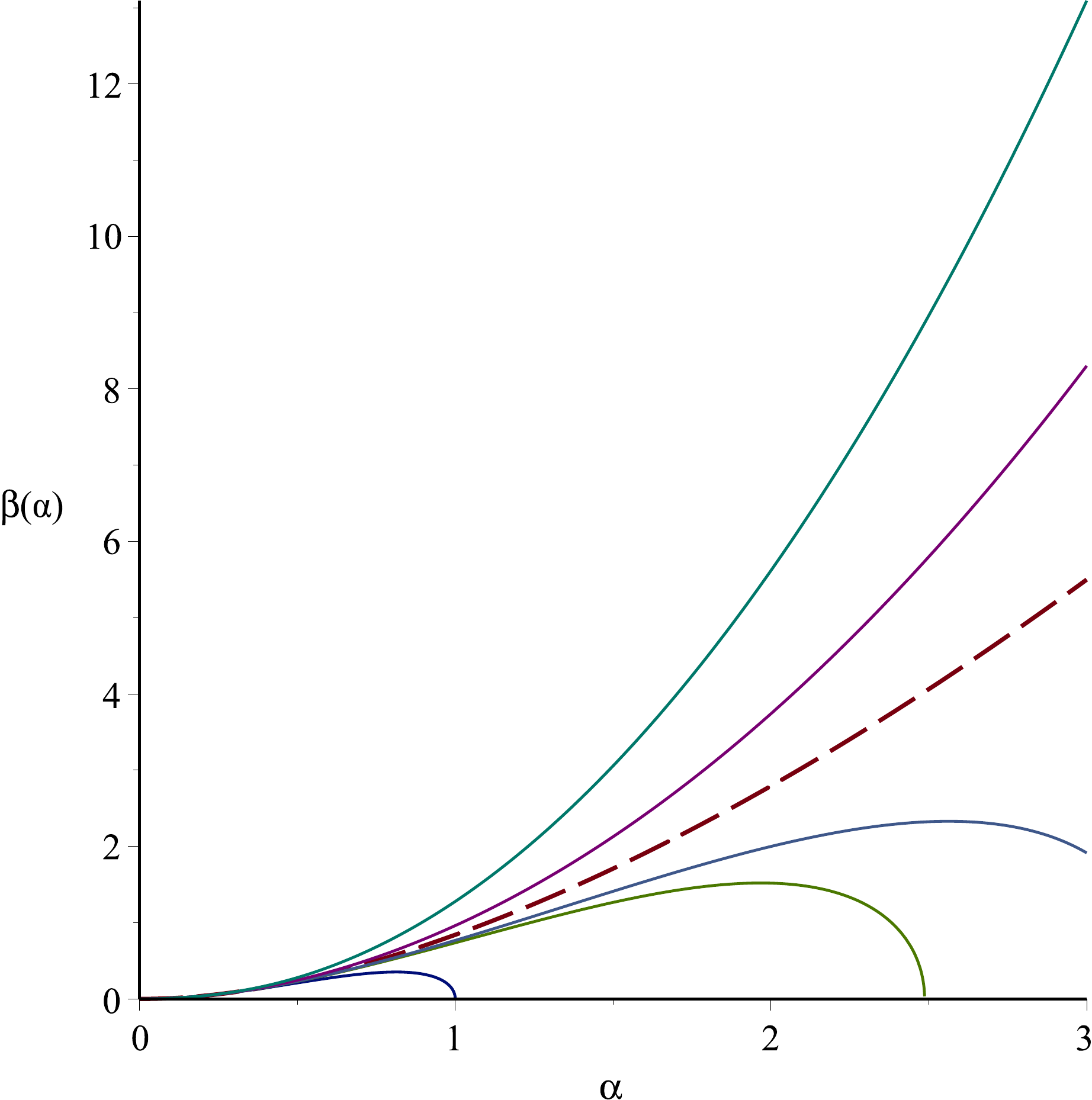}  
\caption{ \small The $\beta$-function for different choices of $\xi$: only the 
separatrix corresponding to $\xi=\xi_{*}$ (dashed line) and the curves above it with $\xi > \xi_{*}$ are physical, whereas those with 
a zero for $\xi< \xi_{*}$ are not.} 
\label{Beta} 
\end{figure}

The possible solutions reflect the results of \cite{Y11} alluded to in the previous section: the separatrix $\beta^{*}(\alpha)$ 
separates global solutions from those with a finite interval of definition. As a result, the above family of solutions in (\ref{anoDim})
covers the set of all possible solutions.  

\section{Flat contributions}\label{secFlatCon}

Let for the sake of the next assertions $D:=1-\alpha \partial_{\alpha}$ and 
\begin{equation}
 \F:= \left\{ \ f \in \mathscr{C}^{\infty}(\mathbb{R}_{+}) \ \vert  
\ \forall n \in  \mathbb{N}: \  \lim_{\alpha \downarrow 0} \partial_{\alpha}^{n} f(\alpha) = 0 \ \right\}
\end{equation}
be the algebra of all flat functions. To study flat contributions, we have to make an assumption 
about $P(\alpha)$ against our better judgement due to a mathematical subtlety. We still believe our results to be of value. The 
issue is this: because the algebra $\F$ is a subspace in the space of smooth functions $\mathscr{C}^{\infty}(\mathbb{R}_{+})$, there
exists a projector $\pi_{\F}:\mathscr{C}^{\infty}(\mathbb{R}_{+}) \rightarrow \F$ such that $f \in \mathscr{C}^{\infty}(\mathbb{R}_{+})$ can be 
decomposed into
\begin{equation}\label{decomp}
 f = (\id - \pi_{\F})(f) + \pi_{\F} (f) = f_{0} + f_{1},
\end{equation}
where $f_{1}:= \pi_{\F}(f) \in \F$ is flat. However, there is surely not just one projector and hence not just one 
possible decomposition of $f$ into 'flat' and 'non-flat': take any flat $g \in \F$, then 
\begin{equation}
 f= f_{0} - g + f_{1} + g = \widetilde{f}_{0} + \widetilde{f}_{1}
\end{equation}
with $\widetilde{f}_{0}:=f_{0} - g$ being the non-flat and $\widetilde{f}_{1} = f_{1} + g$ the flat part. As a consequence, there is 
no unique decomposition of the desired kind and things get cloudy at the attempt to find a strict mathematical definition of 
'flat contribution'.  However, this is not so if we restrict ourselves
to the subspace of functions $f \in \mathscr{C}^{\infty}(\mathbb{R}_{+})$ with convergent Taylor series 
\begin{equation}
 (Tf)(\alpha):=\sum_{k\geq 0}\frac{1}{k!}f^{(k)}(0^{+})\alpha^{k}
\end{equation}
at zero having an analytic continuation $\widetilde{T}f$ to the full half-line $[0,\infty)$. An easy example is 
\begin{equation}
 f(\alpha)=\frac{1}{1+\alpha} + e^{-1/\alpha}.
\end{equation}
Its Taylor series $\sum_{j\geq 0}(-1)^{j}\alpha^{j}$ is convergent, enjoys an analytic continuation to $[0,\infty)$ 
and yet it converges nowhere to $f(\alpha)$. We denote the algebra of these functions by $\A$
and define the projector $\pi_{\F}:\A \rightarrow \F$ as the (uniquely determined) linear operator that 
subtracts the analytic continuation of the convergent Taylor series, i.e.
\begin{equation}
 \pi_{\F}(f):=f-\widetilde{T}f = (\id -\widetilde{T})f
\end{equation}
is in $\F$ and the decomposition $f=\widetilde{T}f + \pi_{\F}(f)$ is unique. We therefore have the decomposition
\begin{equation}\label{Decompo}
 \A = \A_{0} \oplus \F
\end{equation}
with $\A_{0}:=(\id - \pi_{\F})\A$ being also an algebra. We write $\pi_{\A_{0}} := (\id - \pi_{\F})=\widetilde{T}$ for its projector. 
In fact, $\A_{0}$ is the well known algebra of analytic functions on $[0,\infty)$. It is invariant under differential operators 
and hence a \emph{differential algebra}. In particular, this implies $D$-invariance, i.e. $D \A_{0} \subset \A_{0}$. The flat algebra
also has this property. For later reference, we list its properties:
\begin{itemize}
 \item[(i)] $\F$ is $D$-invariant, that is, $D \F \subseteq \F$. 
 \item[(ii)] The product of any function in $\A$ and a flat function is flat: $\A \F \subset \F$, i.e. $\F$ is an ideal in the 
algebra $\A$.
\end{itemize}
In summary, $\A$ is the class of functions $f \in \mathscr{C}^{\infty}(\mathbb{R}_{+})$ with convergent Taylor 
series(at $\alpha=0$) \emph{that do not converge to} $f$ only if $\pi_{\F}(f) \neq 0$, i.e. if $f$ features a nontrivial flat part(which 
renders it nonanalytic).  
Note that the operator $D$ has the one-dimensional kernel $\ker D = \mathbb{R} \alpha \subset \A_{0}$ and we therefore have a 
third property:
\begin{itemize}
 \item [(iii)] If $f \in \mathbb{R} \alpha + \F$, then $Df \in \F$.
\end{itemize}
We shall draw on (i)-(iii) in the proofs of the following assertions which we view as interesting on the following grounds. 

Being aware that $\gamma(\alpha)$ and almost surely also $P(\alpha)$ are nonanalytic functions with divergent Taylor series,
we would like to point out that within our approach of approximating $P(\alpha)$ perturbatively by a polynomial in $\alpha$ and hence by a function 
in the class $\A_{0}$, it makes perfect sense to us to assume that $\gamma(\alpha)$ is at most in the class $\A$: 
the coefficient recursion in (\ref{AsymRecursion}) can only be expected to lead to a divergent series of $\gamma(\alpha)$ in 
mathematically contrived situations. By allowing $P(\alpha)$ to be in $\A$, we go one doable step \emph{beyond} perturbation theory. 
'Doable' because the decomposition (\ref{Decompo}) is mathematically well-defined in a way that enables us to get a grip on the 
otherwise vague concept of flat contributions which the $\beta$-function may or may not feature.

\begin{Claim}[Flat perturbations I]
Let $P \in \A$ with a nontrivial flat part: $\pi_{\F}(P) \neq 0$. Then any solution of the ODE
\begin{equation}\label{ode1}
 \gamma(\alpha) + \gamma(\alpha)D\gamma(\alpha) = P(\alpha)
\end{equation}
does also have a nontrivial flat part, that is, $\pi_{\F}(\gamma) \neq 0$. 
\end{Claim}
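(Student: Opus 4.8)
The plan is to argue by contradiction, leaning entirely on the direct-sum decomposition $\A = \A_0 \oplus \F$ and the fact that $\A_0$ is a $D$-invariant subalgebra. Suppose a solution $\gamma \in \A$ of (\ref{ode1}) had trivial flat part, i.e. $\pi_{\F}(\gamma)=0$, so that $\gamma = \pi_{\A_0}(\gamma) \in \A_0$ is analytic. I would then show that the whole left-hand side of (\ref{ode1}) stays in $\A_0$: by the $D$-invariance recorded above ($D\A_0 \subset \A_0$) we get $D\gamma \in \A_0$, by closure of $\A_0$ under multiplication we get $\gamma\, D\gamma \in \A_0$, and adding $\gamma \in \A_0$ keeps us there. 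Hence $P = \gamma + \gamma D\gamma \in \A_0$, so $\pi_{\F}(P)=0$, contradicting the hypothesis $\pi_{\F}(P)\neq 0$. This already gives the claim.

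To make the mechanism transparent and to put properties (i)--(iii) to work, I would alternatively keep the full decomposition $\gamma = \gamma_0 + \gamma_1$ with $\gamma_0 := \pi_{\A_0}(\gamma)$ and $\gamma_1 := \pi_{\F}(\gamma)$, and expand
\[
\gamma D\gamma = \gamma_0 D\gamma_0 + \gamma_0 D\gamma_1 + \gamma_1 D\gamma_0 + \gamma_1 D\gamma_1 .
\]
By (i) one has $D\gamma_1 \in \F$, and since by (ii) the flat functions form an ideal in $\A$, each of the last three products is flat while $\gamma_0 D\gamma_0 \in \A_0$. Reading off the flat part of $P = \gamma_0 + \gamma_1 + \gamma D\gamma$ therefore yields
\[
\pi_{\F}(P) = \gamma_1 + \gamma_0 D\gamma_1 + \gamma_1 D\gamma_0 + \gamma_1 D\gamma_1 .
\]
Every term on the right carries a factor of $\gamma_1$ or $D\gamma_1$, so $\gamma_1 = 0$ (whence also $D\gamma_1 = 0$) would force $\pi_{\F}(P)=0$; the contrapositive is exactly the assertion.

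The main obstacle is not one of depth but of applicability: the entire argument presupposes that the solution $\gamma$ actually lies in $\A$, since otherwise $\pi_{\F}(\gamma)$ is not even defined. I would therefore be explicit that this is the standing assumption motivated in the paragraph preceding the claim (that the coefficient recursion (\ref{AsymRecursion}) is not expected to push $\gamma$ outside $\A$). Once $\gamma \in \A$ is granted, the proof is purely algebraic: it uses only that $\A_0$ is a $D$-invariant subalgebra, equivalently that $\F$ is a $D$-invariant ideal of $\A$, and in particular it invokes none of the positivity or $\mathscr{C}^2$ hypotheses appearing in Proposition \ref{KYB}.
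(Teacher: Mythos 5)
Your proposal is correct and your first argument is exactly the paper's proof: the paper likewise assumes $\pi_{\F}(\gamma)=0$, so that $\gamma\in\A_{0}$, and concludes $\pi_{\F}(D\gamma)=0$ and $\pi_{\F}(\gamma D\gamma)=0$ because $\A_{0}$ is a differential algebra, whence $\pi_{\F}(P)=0$, the contrapositive of the claim. Your explicit term-by-term decomposition and your remark that $\gamma\in\A$ is a standing assumption (left implicit in the paper's one-line proof) are sound elaborations but do not change the route.
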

\begin{proof} Let $\pi_{\F}(\gamma) = 0$. Then follows $\pi_{\F}(D\gamma) =0$ and also $\pi_{\F}(\gamma D \gamma)=0$ by $\A_{0}$ being
a differential algebra. This entails $\pi_{\F}(P)=0$.  
\end{proof}

Clearly, the reason why a flat contribution may pop up in the anomalous dimension $\gamma$ even when $\pi_{\F}(P)=0$ is that (\ref{ode1})
implies
\begin{equation}
 \pi_{\F}(\gamma) + \pi_{\F}(\gamma D \gamma) = 0
\end{equation}
which can be massaged into a differential equation for the flat function $\pi_{\F}(\gamma)$ and has solutions beyond the trivial one.
The next assertion specializes in the toy case $P(\alpha) \in \alpha + \F$ and reveals how the anomalous dimension $\gamma$ is affected.

\begin{Claim}[Flat perturbations II]\label{FLATP}
Let $\gamma(\alpha)$ be a solution of 
\begin{equation}\label{noflatPert}
 \gamma(\alpha) + \gamma(\alpha)D \gamma(\alpha) = \alpha .
\end{equation}
and $\overline{\gamma}(\alpha)$ of its flatly perturbed version 
\begin{equation}\label{flatPert}
 \overline{\gamma}(\alpha) + \overline{\gamma}(\alpha)D\overline{\gamma}(\alpha)=\alpha + f(\alpha),
\end{equation}
where $f \in \F$. Then $\overline{\gamma}-\gamma \in \F$, i.e. a flat perturbation of the rhs of 
(\ref{noflatPert}) leads to a flat perturbation of its solution. 
\end{Claim}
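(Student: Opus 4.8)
The plan is to exploit the direct sum decomposition $\A = \A_{0} \oplus \F$ and to prove that the \emph{analytic parts} of the two solutions coincide, so that their difference lives entirely in the flat ideal $\F$. Working under the paper's standing assumption that both $\gamma$ and $\overline{\gamma}$ lie in $\A$ (as does $\alpha + f$, since $f \in \F$), I would apply the analytic projector $\pi_{\A_{0}} = \id - \pi_{\F}$ to each of the two equations (\ref{noflatPert}) and (\ref{flatPert}) \emph{separately} and then compare the results, rather than working with the subtracted ODE from the outset.

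First I would record that $\pi_{\A_{0}}$ is a ring homomorphism that commutes with $D$. Both facts follow directly from the listed properties: writing $g = g_{0} + g_{1}$ and $h = h_{0} + h_{1}$ with $g_{0},h_{0} \in \A_{0}$ and $g_{1},h_{1} \in \F$, property (ii) makes every cross term in the product $gh$ flat, so $\pi_{\A_{0}}(gh) = g_{0}h_{0} = \pi_{\A_{0}}(g)\,\pi_{\A_{0}}(h)$; and property (i) together with the $D$-invariance $D\A_{0}\subset\A_{0}$ gives $\pi_{\A_{0}}(Dg) = Dg_{0} = D\pi_{\A_{0}}(g)$. Applying $\pi_{\A_{0}}$ to (\ref{noflatPert}) and (\ref{flatPert}) and using $\pi_{\A_{0}}(f)=0$ then shows that $\gamma_{0} := \pi_{\A_{0}}(\gamma)$ and $\overline{\gamma}_{0} := \pi_{\A_{0}}(\overline{\gamma})$ solve one and the same analytic equation
\begin{equation}
 \psi + \psi D\psi = \alpha, \hs{1} \psi \in \A_{0}.
\end{equation}

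The heart of the argument is the uniqueness of this analytic solution under the physical boundary condition $\psi(0^{+})=0$. Evaluating at the origin gives $\psi(0^{+})[1+\psi(0^{+})]=0$, so the QED-like branch is $\psi(0^{+})=0$, and a flat $f$ does not shift this boundary value, placing both $\gamma_{0}$ and $\overline{\gamma}_{0}$ in that branch. Feeding a convergent series $\psi = \sum_{k\geq 1}u_{k}\alpha^{k}$ into the recursion (\ref{AsymRecursion}) with the coefficients of $P(\alpha)=\alpha$ (that is $r_{1}=1$, $r_{k}=0$ for $k\geq 2$) forces $u_{1}=1$ and $u_{k}=0$ for all $k\geq 2$, whence $\psi = \alpha$; one checks this against property (iii), since $\ker D = \mathbb{R}\alpha$ gives $D\alpha=0$ and indeed $\alpha + \alpha D\alpha = \alpha$. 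Consequently $\gamma_{0}=\overline{\gamma}_{0}=\alpha$ and
\begin{equation}
 \overline{\gamma} - \gamma = (\overline{\gamma}_{0}-\gamma_{0}) + (\pi_{\F}(\overline{\gamma})-\pi_{\F}(\gamma)) = \pi_{\F}(\overline{\gamma})-\pi_{\F}(\gamma) \in \F,
\end{equation}
which is the assertion.

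The step I expect to be the main obstacle is this uniqueness claim together with its hidden hypotheses: it relies on $\gamma$ and $\overline{\gamma}$ genuinely belonging to $\A$ so that the projector applies at all, and on both solutions inhabiting the branch $\psi(0^{+})=0$ rather than the spurious $\psi(0^{+})=-1$. For the toy case the explicit family (\ref{anoDim}) confirms $\gamma\in\A$ with analytic part $\alpha$, but for a general flatly perturbed $\overline{\gamma}$ one must argue a priori that it too is of class $\A$. A fully self-contained alternative, bypassing the class-$\A$ assumption, would be to subtract the two equations directly, yielding the singular first-order equation $\alpha\overline{\gamma}\,\delta' = \delta(1 + 2\gamma - \alpha\gamma' + \delta) - f$ for $\delta := \overline{\gamma}-\gamma$, and to show by a bootstrap on the derivatives at $0^{+}$ that $\delta$ and all its derivatives vanish there; this is closer in spirit to Proposition \ref{KYB}, but the vanishing prefactor $\alpha\overline{\gamma}$ makes the estimates delicate, which is why I would present the projector argument as the primary route.
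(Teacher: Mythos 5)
Your proof is correct, and it rests on the same foundation as the paper's --- the direct sum $\A = \A_{0} \oplus \F$ together with properties (i)--(iii) --- but you arrange the argument genuinely differently. The paper first subtracts (\ref{noflatPert}) from (\ref{flatPert}), then applies $\pi_{\A_{0}}$ once to the subtracted equation to obtain $h_{0} + h_{0}Dh_{0} + \alpha Dh_{0} = 0$ for $h_{0} = \pi_{\A_{0}}(\overline{\gamma}-\gamma)$, rewrites this as $(h_{0}+\alpha) + (h_{0}+\alpha)D(h_{0}+\alpha) = \alpha$, and then invokes the Section \ref{secFoApp} fact that every solution of the toy equation lies in $\alpha + \F$ to force $h_{0} \in \F \cap \A_{0} = \{0\}$. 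You instead project the two equations separately --- for which you correctly isolate and prove the lemma that $\pi_{\A_{0}}$ is an algebra homomorphism commuting with $D$, which the paper never states explicitly --- and then establish directly, via the recursion (\ref{AsymRecursion}) with $r_{1}=1$ and $r_{k}=0$ for $k \geq 2$, that $\psi = \alpha$ is the unique solution in $\A_{0}$. What this buys you is independence from the explicit Lambert W classification of solutions on which the paper's final step leans; the cost is the extra homomorphism lemma, and the paper's subtract-then-project route is correspondingly shorter. Two remarks. First, your worry about the branch $\psi(0^{+})=-1$ is vacuous: if $\psi(0^{+})=-1$, then $1 + D\psi = 1 + \psi - \alpha\psi' = \mathcal{O}(\alpha^{2})$, so $\psi + \psi D\psi = \psi(1+D\psi) = \mathcal{O}(\alpha^{2})$ cannot equal $\alpha$; comparing coefficients at order $\alpha$ excludes that branch with no physical input needed. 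Second, the hidden hypothesis you flag --- that $\overline{\gamma}$ belongs a priori to $\A$ with an asymptotic expansion in the physical branch --- is present in the paper's proof as well, and is handled there by exactly the informal argument you sketch: since (\ref{AsymRecursion}) is blind to flat parts, the perturbation series of $\overline{\gamma}$ and $\gamma$ coincide, so it converges whenever that of $\gamma$ does. You therefore lose nothing relative to the published proof by working under that assumption, and your closing sketch of a direct estimate on $\delta = \overline{\gamma}-\gamma$ (whose singular ODE you derive correctly) rightly identifies the route one would need to dispense with it, in the spirit of Proposition \ref{KYB}.
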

\begin{proof}
Recall from Section \ref{secFoApp} that any solution of (\ref{noflatPert}) satisfies
\begin{equation}
 \gamma(\alpha) \in \alpha + \F
\end{equation}
and thus $\pi_{\A_{0}}(\gamma)=\alpha$. Because $P(\alpha)$ determines the perturbation series of $\gamma(\alpha)$ uniquely
through (\ref{AsymRecursion}) where flat parts do not participate, the perturbation series of $\overline{\gamma}$ and $\gamma$ coincide. 
Trivially, this means that if $\gamma$ has a convergent Taylor series, so does $\overline{\gamma}$. Hence 
$\gamma, \overline{\gamma} \in \A$ and there is a decomposition $\overline{\gamma}-\gamma=h_{0}+h_{1}$, 
where $h_{0}=\pi_{\A_{0}}(\overline{\gamma}-\gamma)$ and $h_{1} \in \F$ is flat. We will show that the function $h_{0}$ vanishes 
everywhere. Subtracting (\ref{noflatPert}) from (\ref{flatPert}) yields
\begin{equation}\label{subtra}
\overline{\gamma}-\gamma + (\overline{\gamma}-\gamma) D \overline{\gamma} + \gamma D(\overline{\gamma}-\gamma) = f.
\end{equation}
To get rid of all the flat stuff, we apply the projector $\pi_{\A_{0}}=(\id-\pi_{\F})$ to both sides of this equation and obtain
\begin{equation}\label{proj}
 h_{0} + h_{0}Dh_{0} + \alpha D h_{0} = 0,
\end{equation}
where we have used $D(\gamma + h_{1}) \in \F$ and (i)-(iii). We can rewrite (\ref{proj}) in the form
\begin{equation}
 (h_{0} + \alpha) + (h_{0} + \alpha)D (h_{0} + \alpha) = \alpha,
\end{equation}
and find that $\gamma = h_{0} + \alpha$. This means $h_{0} \in \F$ and therefore $\pi_{\F}(h_{0})=h_{0}$. Consequently, 
on account of $\pi_{\F}(h_{0})=0$, we see that $h_{0}=0$.   
\end{proof}

The next Proposition generalizes this latter assertion. 

\begin{Proposition}[Flat Perturbations III]
 Let $P \in \A_{0}$ be such that $P(0^{+})=0$, $P'(0^{+})\neq 0$ and $P(\alpha)>0$ for $\alpha >0$. Then any solution 
$\overline{\gamma}$ of 
\begin{equation}\label{flaPertis}
 \overline{\gamma}(\alpha) + \overline{\gamma}(\alpha) D \overline{\gamma}(\alpha) = P(\alpha) + f(\alpha)
\end{equation}
with $f \in \F$ differs from a solution of $\gamma(\alpha) + \gamma(\alpha) D \gamma(\alpha) = P(\alpha)$ only flatly, 
i.e. $\overline{\gamma}-\gamma \in \F$.
\end{Proposition}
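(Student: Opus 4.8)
The plan is to follow the architecture of the proof of Claim \ref{FLATP}, but to replace the explicit datum used there ($\gamma \in \alpha + \F$, i.e. $\pi_{\A_{0}}(\gamma) = \alpha$) by a projection argument valid for an arbitrary analytic $P$. The guiding observation is that applying the analytic projector $\pi_{\A_{0}}$ to the ODE extracts a genuine \emph{analytic} solution of the unperturbed equation, and that such an analytic solution is forced to be unique by Proposition \ref{KYB}.

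First I would argue that every solution of either equation lies in the class $\A$. Exactly as in Claim \ref{FLATP}, the coefficient recursion (\ref{AsymRecursion}) is fed only by the Taylor coefficients $r_{k}$ of the analytic right-hand side and does not see flat contributions; since the two equations share the same analytic data $P \in \A_{0}$ modulo $\F$, the asymptotic series of $\gamma$ and $\overline{\gamma}$ agree and converge, so $\gamma, \overline{\gamma} \in \A$. I then decompose $\gamma = \gamma_{0} + \gamma_{1}$ and $\overline{\gamma} = \overline{\gamma}_{0} + \overline{\gamma}_{1}$ with analytic parts $\gamma_{0}, \overline{\gamma}_{0} \in \A_{0}$ and flat parts $\gamma_{1}, \overline{\gamma}_{1} \in \F$.

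Next I would apply $\pi_{\A_{0}} = \id - \pi_{\F}$ to both ODEs. Since $\A_{0}$ is a differential algebra and, by (i), $D\gamma_{1} \in \F$, while by (ii) any product with a flat factor is flat, the three cross terms $\gamma_{0} D\gamma_{1}$, $\gamma_{1} D\gamma_{0}$ and $\gamma_{1} D\gamma_{1}$ in the expansion of $\gamma D\gamma$ all lie in $\F$ and are annihilated by $\pi_{\A_{0}}$; hence $\pi_{\A_{0}}(\gamma D\gamma) = \gamma_{0} D\gamma_{0}$, and likewise for $\overline{\gamma}$. Because $\pi_{\A_{0}}(P)=P$ and $\pi_{\A_{0}}(f)=0$, the flat perturbation $f$ drops out entirely and both analytic parts obey the \emph{same} analytic equation
\begin{equation}
 \gamma_{0} + \gamma_{0} D\gamma_{0} = P = \overline{\gamma}_{0} + \overline{\gamma}_{0} D\overline{\gamma}_{0}.
\end{equation}

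Finally I would close the argument with uniqueness. The hypotheses on $P$ (namely $P \in \A_{0} \subset \mathscr{C}^{2}(\mathbb{R}_{+})$, $P>0$ on $(0,\infty)$, $P(0^{+})=0$, $P'(0^{+})\neq 0$) are precisely those of Proposition \ref{KYB}, so any two solutions of $\gamma + \gamma D\gamma = P$ differ by a flat function on $[0,\alpha_{0}]$. As $\gamma_{0}$ and $\overline{\gamma}_{0}$ are both solutions and both analytic, their difference lies in $\A_{0} \cap \F = \{0\}$, whence $\gamma_{0} = \overline{\gamma}_{0}$. Linearity of $\pi_{\A_{0}}$ then gives $\pi_{\A_{0}}(\overline{\gamma}-\gamma) = \overline{\gamma}_{0} - \gamma_{0} = 0$, so that $\overline{\gamma}-\gamma \in \F$, as claimed. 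I expect the genuine obstacle to be the very first step: rigorously ensuring that the solutions really belong to $\A$, i.e. that their asymptotic series actually converge rather than being merely formal. This is exactly the \emph{mathematical subtlety} flagged earlier and the reason the whole discussion is carried out inside the class $\A$.
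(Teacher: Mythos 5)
Your proposal is correct and follows essentially the same route as the paper's own proof: project both ODEs with $\pi_{\A_{0}}$, observe that the analytic parts then solve the same unperturbed equation, and invoke Proposition \ref{KYB} together with $\A_{0}\cap\F=\{0\}$ to force $\pi_{\A_{0}}(\overline{\gamma}-\gamma)=0$ (your $\overline{\gamma}_{0}-\gamma_{0}$ is precisely the paper's $h_{0}$). The only difference is presentational: you spell out the cross terms killed by $\pi_{\A_{0}}$ and the recursion argument placing $\gamma,\overline{\gamma}\in\A$, which the paper imports implicitly from the proof of Claim \ref{FLATP}.
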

\begin{proof}
Let again $\overline{\gamma}-\gamma=h_{0}+h_{1}$ be the decomposition as in Claim \ref{FLATP} and 
$\gamma_{0} = \pi_{\A_{0}}(\gamma)$ the non-flat part of $\gamma$. Then follows 
$\pi_{\A_{0}}(\overline{\gamma})= \gamma_{0} + h_{0}$. Purging both ODEs of all flat contributions by applying $\pi_{\A_{0}}$
yields 
\begin{equation}
\begin{split}
\gamma_{0} + \gamma_{0} D \gamma_{0}  =P \hs{1} \mbox{and} \hs{1}
(\gamma_{0} + h_{0}) + (\gamma_{0} + h_{0})D (\gamma_{0} + h_{0}) = P,  
\end{split}
\end{equation}
where $\pi_{\A_{0}}(P) = P$.
Then we have $h_{0} \in \F$ by Proposition \ref{KYB}. Since $h_{0} \in \A_{0}$ by definition, we conclude $h_{0}=0$.  
\end{proof}

\section{Landau pole avoidance}\label{secLanAvoi}

If we insert $P(x)=x$ into the integral of (\ref{YLanCrit}), we see that $\mathcal{L}(P)<\infty$ is satisfied which, according 
to \cite{BKUY09}, means that the toy model has a Landau pole. To see this more explicitly, consider   
the RG equation for the running coupling $\alpha(L)$ 
\begin{equation}\label{RGbeta}
 \partial_{L}\alpha(L)=\beta(\alpha(L)), 
\end{equation}
which can be integrated to give
\begin{equation}\label{runC}
 L-L_{0} = \int^{\alpha(L)}_{\alpha_{0}}\frac{dx}{\beta(x)} = \ln\left|\frac{W(\xi e^{-1/\alpha(L)})}{W(\xi e^{-1/\alpha_{0}})}\right| 
\end{equation}
where $\alpha(L_{0})=\alpha_{0}$ is some reference coupling.
We find that our model has a Landau pole since the integral
\begin{equation}\label{beInt}
 \int^{\infty}_{\alpha_{0}}\frac{dx}{\beta(x)} = \ln\left|\frac{W(\xi)}{W(\xi e^{-1/\alpha_{0}})}\right| 
= \frac{1}{\alpha_{0}} + W(\xi e^{-1/\alpha_{0}}) - W(\xi)
\end{equation}
exists for any $\alpha_{0}>0$ and $\alpha(L)$ diverges for a finite $L>0$ when $L-L_{0}$ equals the value of the integral 
in (\ref{beInt}). To avoid a Landau pole, we require that this very integral diverge which in our case means that the $\beta$-function
must not grow too rapidly. For the separatrix choice $\xi = \xi_{*} = -e^{-1}$ we have 
\begin{equation}
 \beta(\alpha) \sim \sqrt{2} \alpha^{\frac{3}{2}} - \frac{2}{3} \alpha + \mathcal{O}(1)  \hs{1} \mbox{as} \hs{0.5}  
\alpha \rightarrow \infty 
\end{equation}
and thus a decreased growth compared to the instanton-free 1-loop $\beta$-function given by $\beta(\alpha)|_{\xi=0} = \alpha^{2}$ 
which is because the \emph{instantonic contribution} \emph{works towards the avoidance of a Landau pole} by means of 
the asymptotics
\begin{equation}
 1+W(-e^{-1-\frac{1}{\alpha}}) \sim \sqrt{\frac{2}{\alpha}} -\frac{2}{3 \alpha}  + \mathcal{O}(\alpha^{-2/3})   
\hs{1} \mbox{as} \hs{0.5}  \alpha \rightarrow \infty .
\end{equation}
This is an example in which the instantonic contribution alters the convergence behaviour of the integral 
\begin{equation}
 \int_{x_{0}}^{\infty} \frac{dx}{\beta(x)} 
\end{equation}
and may thus in other cases exclude the existence of a Landau pole, notwithstanding that any perturbative series of the $\beta$-function 
is blind to such contributions. 

Given the above facts about our ODE and the prominent role of the flat algebra $\F$, it is not unreasonable to assume that the 
anomalous dimension $\gamma(\alpha)$ and hence the $\beta$-function 
\begin{equation}
\beta(\alpha)=\alpha \gamma(\alpha)=\beta_{0}(\alpha) + \beta_{1}(\alpha) 
\end{equation}
has a flat piece $\pi_{\F}(\beta)=\beta_{1}$.
However, let us now be really bold and assume that this part takes the form 
\begin{equation}
 \beta_{1}(\alpha) = (\bar{\beta}(\alpha) - \beta_{0}(\alpha)) e^{-\frac{r}{\alpha}},
\end{equation}
where $r>0$ is some positive real number and $\bar{\beta}(\alpha)$ is such that 
$\int_{R}^{\infty} \bar{\beta}(x)^{-1}dx  = \infty$ for any $R >0$. Let us furthermore assume that the non-flat piece 
$\beta_{0}=\pi_{\A_{0}}(\beta)$ satisfies $\lim_{t \downarrow 0} (1-e^{-rt})\beta_{0}(1/t)=0$. Then, on account of 
\begin{equation}
 \beta(\alpha) = \beta_{0}(\alpha) + (\bar{\beta}(\alpha) - \beta_{0}(\alpha)) e^{-\frac{r}{\alpha}} \sim \bar{\beta}(\alpha) \hs{1}
\mbox{as} \hs{1} \alpha \rightarrow \infty
\end{equation}
one has 
\begin{equation}
 \int_{\alpha_{0}}^{\infty} \frac{dx}{\beta(\alpha)} = \infty 
\end{equation}
and thus a Landau pole-free theory. Alas, we do not know the real $\beta$-function of QED and it is an inherent 
feature of perturbation theory with respect to the coupling that this integral \emph{converges}\footnote{In the case of zeros of the
$\beta$-function choose $\alpha_{0}$ beyond them.}. Therefore, any approximation
of the running coupling as a solution of the RG equation (\ref{RGbeta}) within this framework is bound to have a pole which, 
however, we hardly need to remind the reader, leaves the question of a Landau pole of the real theory untouched.

\section{Landau pole of the toy model}\label{secToyLP}

We return to our toy model and solve (\ref{runC}) for the running coupling $\alpha(L)$ to find 
\begin{equation}\label{RunC}
 \alpha_{\xi}(L) = \frac{\alpha_{0}}{1-\alpha_{0}(L-L_{0}) + g_{\xi}(\alpha_{0},e^{L-L_{0}})},
\end{equation}
where $g_{\xi}(\alpha_{0},x):=\alpha_{0}(1-x)W(\xi e^{-1/\alpha_{0}})$ is flat with respect to the reference coupling 
$\alpha(L_{0})=\alpha_{0}$.
We shall now have a look at the running coupling for both spacelike and timelike photons and compare the instanton-free case $\xi=0$ with
the separatrix case $\xi=\xi_{*}(=-e^{-1})$. The hampering effect of the flat contribution on the growth of the $\beta$-function turns out
to result in lower values of the coupling in the case of large momentum transfer. This is to be expected as the $\beta$-function 
determines the momentum scale dependence of the coupling through the RG equation $\partial_{L}\alpha(L) = \beta(\alpha(L))$.  

\subsection*{Spacelike photons} 
Figure \ref{runCoup} has a plot of the running coupling $\alpha_{\xi}(L)$ with $\xi=\xi_{*}$ and the instanton-free case $\xi=0$ for spacelike 
photons where $L \in \mathbb{R}$ due to $-q^{2}>0$ and reference coupling $\alpha_{0}=0.1$.
\begin{figure}[ht] 
\includegraphics[height=7cm]{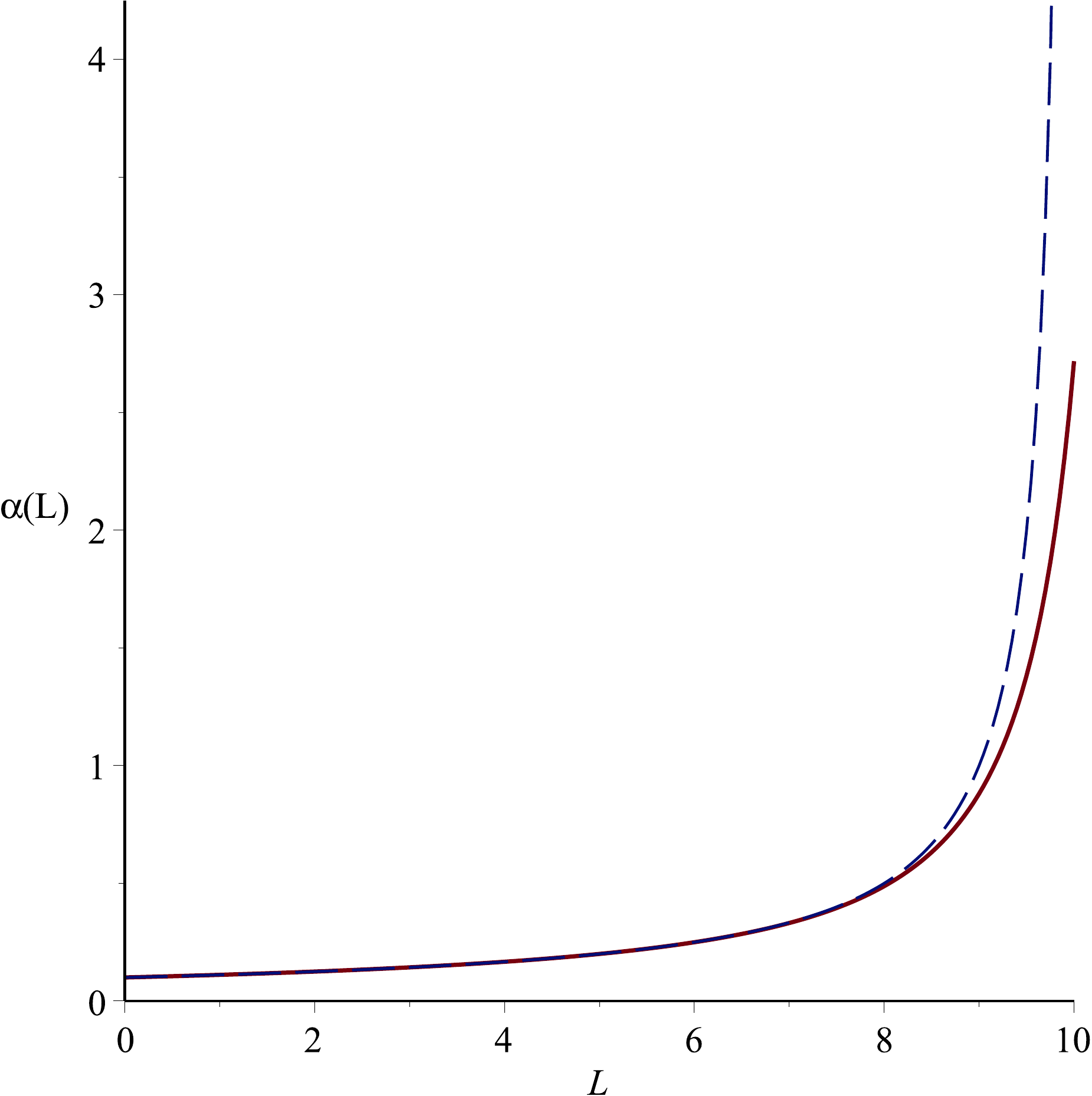}  
\caption{ \small Spacelike photons: the running coupling $\alpha_{\xi}(L)$ for $\xi=\xi_{*}$ (solid) and the instanton-free case $\xi=0$ 
(dashes) at reference coupling $\alpha_{0} = 0.1$ for $L_{0}=0$. } 
\label{runCoup} 
\end{figure}
The diagram shows that the flat piece $g_{\xi}(\alpha_{0},e^{L})$ shifts the Landau pole from $L'=1/\alpha_{0}$ to the solution of 
\begin{equation}\label{transeq}
L''=\frac{1}{\alpha_{0}} + (1 - e^{L''}) W(-e^{-1-1/\alpha_{0}}).
\end{equation}
Note that $L''>L'$ due to $(1 - e^{L}) W(-e^{-1-1/\alpha_{0}}) > 0$ for $L>0$ and that by the 
transcendentality of (\ref{transeq}) there is no canonical way to define a reference scale, usually denoted by $\Lambda$. 

\subsection*{Timelike photons}
In the case of timelike photons, when $-q^{2}<0$, the running coupling in (\ref{RunC}) has an imaginary part, we write (\ref{RunC}) 
in the form 
\begin{equation}\label{timeCoupl}
\widetilde{\alpha}_{\xi}(-s):= \alpha_{\xi}(\log(-s)) = \frac{\alpha_{0}}{1-\alpha_{0}\log(-s)+ g_{\xi}(\alpha_{0},-s)},
\end{equation}
where $L=\log(-s)$ with $s:=q^{2}/\mu^{2}$ and $L_{0}=0$ for reference point $s_{0}=-1$. Complex-valued 'timelike' couplings 
have been studied in the case of QCD: \cite{PenRo81} have argued that $|\widetilde{\alpha}(-s)|$ is to be favoured over $\widetilde{\alpha}(|-s|)$ as perturbation 
coupling parameter for timelike processes. They find better agreement with experimental results at lower order of perturbation theory. 
Whether or not this pertains to QED, Figure \ref{Fig:RunCoup} has a plot of this parameter in the two cases $\xi=\xi_{*}$ and $\xi=0$.
\begin{figure}[ht] 
\includegraphics[height=7cm]{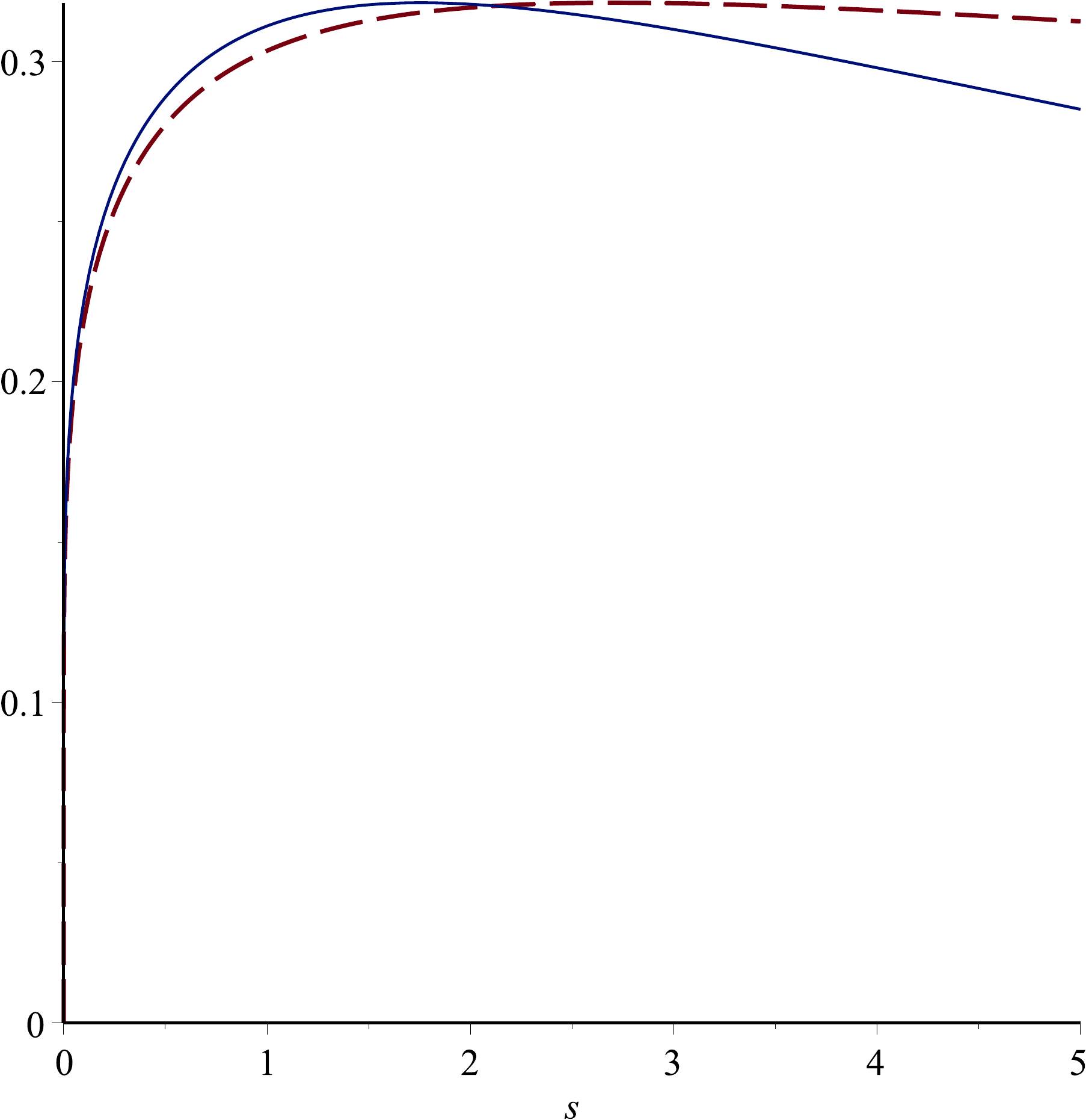}  \hs{0.2} \includegraphics[height=7cm]{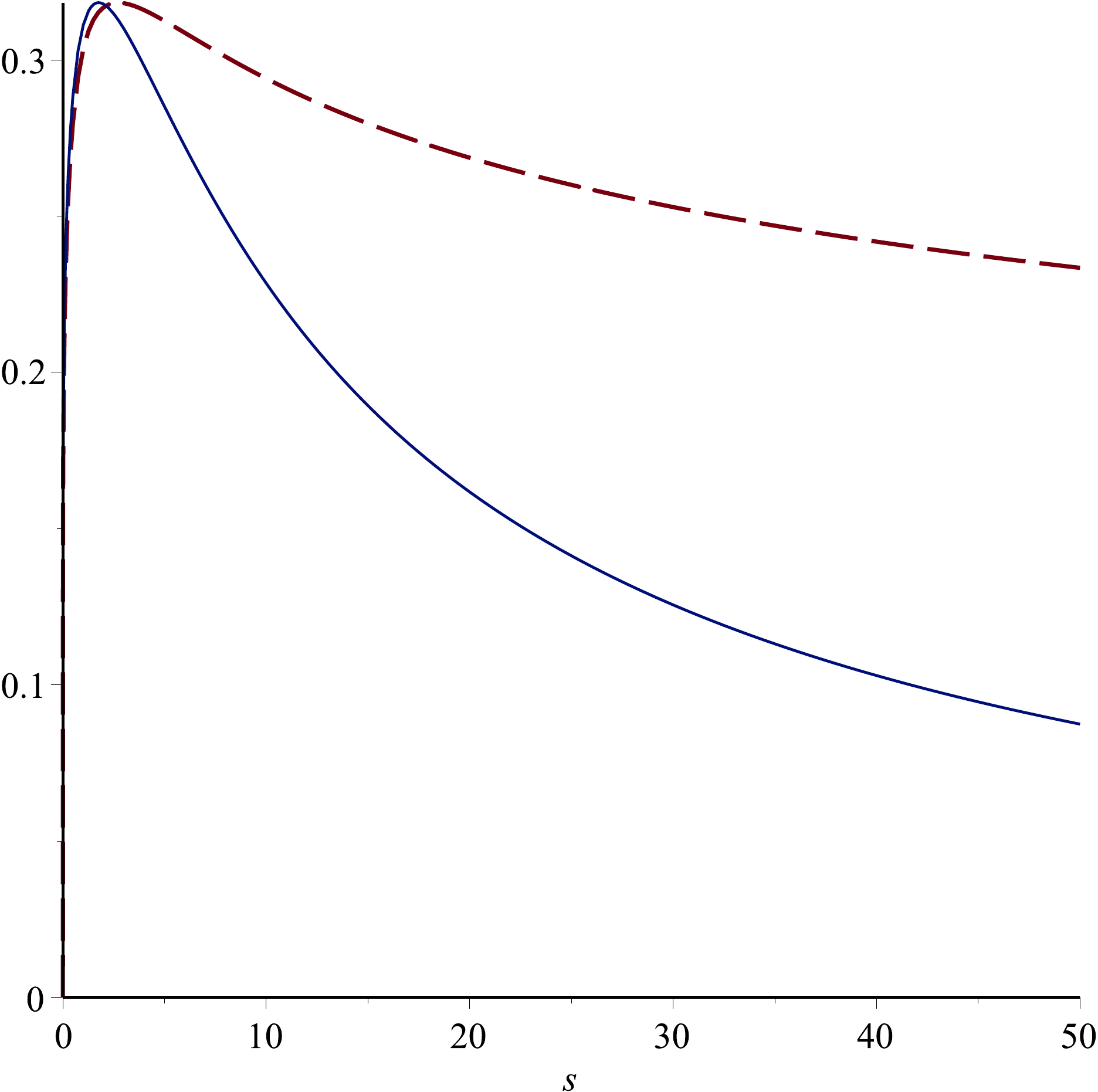}  
\caption{ \small Coupling parameter $|\widetilde{\alpha}_{\xi}(-s)|$ for timelike photons at lower (left) and larger (right) momenta: 
$\xi=\xi_{*}$ (solid) and $\xi=0$ (dashes) with reference coupling $\alpha_{0} = 1$. } 
\label{Fig:RunCoup} 
\end{figure}
Because the coupling is complex-valued, we consider the dispersion relation
\begin{equation}\label{DRInt}
\widetilde{\alpha}_{\xi}(-s)=\int_{0}^{\infty} d\eta \ \frac{\Omega_{\xi}(\eta)}{s-\eta}
\end{equation}
and calculate the spectral density by taking the limit
\begin{equation}
 \lim_{\varepsilon \downarrow 0} \left\{\widetilde{\alpha}_{\xi}(-x-i\varepsilon) - \widetilde{\alpha}_{\xi}(-x+i\varepsilon) \right\}
= - 2 \pi i \Omega_{\xi}(x).
\end{equation}
yielding
\begin{equation}
 \Omega_{\xi}(\eta) =  \frac{\alpha^{2}_{0}}{[1-\alpha_{0}\log(\eta)+ g_{\xi}(\alpha_{0},\eta)]^{2}+ (\alpha_{0}\pi)^{2} },
\end{equation}
which equals the square modulus of the timelike coupling in (\ref{timeCoupl}): $\Omega_{\xi}(\eta)=|\widetilde{\alpha}_{\xi}(-\eta)|^{2}$.
Apart from the absolute value, the spectral density has a graph of the same shape as that in Figure \ref{Fig:RunCoup}, where we see that the 
instantonic contribution shows a significant effect at strong reference coupling $\alpha_{0}=1$: beyond the pair-creation 'bump', 
higher mass state contributions are suppressed. The seemingly natural interpretation of this for 
timelike photons in terms of a weaker interaction in the s-channel where electrons and positrons annihilate has to be more than taken
with a pinch of salt: though these deviations seem to be significant, we have to remind ourselves that these
are toy model results. As pointed out in Section 3, we cannot expect our (single flavour) toy QED to hold for large couplings around
$\alpha_{0}=1$. To make the impact of the flat contribution visible, we have to go up to this level of the coupling strength
and accept that we at the same time enter the realm of a toy model:
the implicit assumption of choosing this reference coupling is that the running coupling is still given by (\ref{timeCoupl}).
 
However, both in the case of timelike and spacelike photons, 
it is by no means far-fetched to conclude that if flat contributions impede the $\beta$-function's growth, the running coupling
may exhibit lower values at higher momentum transfer also in a real-world (3-flavour) QED.

\section{Photon self-energy}\label{secSelf}

If we take the anomalous dimension in (\ref{anoDim}) setting $c=1$, apply the RG recursion (\ref{rec2}) and calculate all higher 
log-coefficients, we find for the self-energy 
\begin{equation}
 \Sigma(\alpha,L):= \gamma(\alpha)\cdot L = \sum_{j \geq 1}\gamma_{j}(\alpha) L^{j}
\end{equation}
an interesting result which we present in the next 
\begin{Claim}
Given the exact solution $\gamma(\alpha)$ of the nonlinear toy model ODE (\ref{ode}), the recursion (\ref{rec2}) yields
\begin{equation}
 j!\gamma_{j}(\alpha)= \alpha W(\xi e^{-\frac{1}{\alpha}}), \hs{2} j \geq 2
\end{equation}
and 
\begin{equation}
 \Sigma_{\xi}(\alpha,e^{L}) =  \alpha L +  \alpha (e^{L}-1) W(\xi e^{-\frac{1}{\alpha}})
= \alpha \ln(-q^{2}/\mu^{2}) - \alpha \left(1+q^{2}/\mu^{2}\right) W(\xi e^{-\frac{1}{\alpha}})
\end{equation}
for the photon self-energy.
\end{Claim}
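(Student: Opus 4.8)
The plan is to establish the closed form for the higher log-coefficients $\gamma_j$ by induction on $j$ via the RG recursion (\ref{rec2}), and then to resum the resulting power series to obtain $\Sigma_\xi$. Throughout I abbreviate $w(\alpha):=W(\xi e^{-1/\alpha})$, so that the anomalous dimension from (\ref{anoDim}) with $c=1$ reads $\gamma_1(\alpha)=\alpha(1+w)$. The first thing I would do is record the single derivative identity that makes the whole computation collapse. Differentiating the defining relation (\ref{Lampop}) gives $W'(x)=W(x)/[x(1+W(x))]$, and since $\partial_\alpha(\xi e^{-1/\alpha})=\alpha^{-2}\xi e^{-1/\alpha}$, the chain rule yields
\begin{equation}
 \alpha^{2}w'(\alpha)=\frac{w}{1+w}.
\end{equation}
The key structural consequence is that the operator $\alpha\partial_\alpha-1$ acts very cleanly on the flat monomial $\alpha w$: one computes $(\alpha\partial_\alpha-1)(\alpha w)=\alpha^{2}w'=w/(1+w)$, the linear-in-$\alpha$ part being annihilated because $\alpha\in\ker D$.

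For the base case $j=2$, I would apply (\ref{rec2}) with $k=1$ and split $(\alpha\partial_\alpha-1)\gamma_1=(\alpha\partial_\alpha-1)\alpha+(\alpha\partial_\alpha-1)(\alpha w)$; the first term vanishes and the second equals $w/(1+w)$, so that
\begin{equation}
 2\gamma_{2}=\gamma_{1}\cdot\frac{w}{1+w}=\alpha(1+w)\cdot\frac{w}{1+w}=\alpha w,
\end{equation}
which is $2!\,\gamma_2=\alpha w$. The induction step is identical in spirit: assuming $j!\,\gamma_j=\alpha w$ for some $j\geq 2$, so $\gamma_j=\alpha w/j!$ is again a scalar multiple of $\alpha w$, the same computation gives $(\alpha\partial_\alpha-1)\gamma_j=(1/j!)\,w/(1+w)$, and multiplication by $\gamma_1=\alpha(1+w)$ cancels the denominator exactly, leaving $(j+1)\gamma_{j+1}=\alpha w/j!$, i.e. $(j+1)!\,\gamma_{j+1}=\alpha w$. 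This closes the induction and establishes $j!\,\gamma_j(\alpha)=\alpha W(\xi e^{-1/\alpha})$ for all $j\geq 2$.

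Finally, for the self-energy I would substitute the closed forms into $\Sigma=\sum_{j\geq 1}\gamma_j L^{j}$, separating the $j=1$ term and using the exponential series $\sum_{j\geq 2}L^{j}/j!=e^{L}-1-L$:
\begin{equation}
 \Sigma_{\xi}(\alpha,e^{L})=\alpha(1+w)L+\alpha w\,(e^{L}-1-L)=\alpha L+\alpha(e^{L}-1)\,w,
\end{equation}
where the two $\alpha w L$ contributions cancel. Substituting $e^{L}=-q^{2}/\mu^{2}$, hence $e^{L}-1=-(1+q^{2}/\mu^{2})$, then produces the stated momentum-space form.

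The main obstacle here is not any genuine difficulty but rather recognizing the single structural fact that drives everything: the operator $\alpha\partial_\alpha-1$ sends $\alpha w$ to $w/(1+w)$, and multiplication by $\gamma_1=\alpha(1+w)$ restores exactly the shape $\alpha w$. Thus the pair $(\,\alpha\partial_\alpha-1,\ \gamma_1\cdot\,)$ behaves like a fixed point on the flat monomial $\alpha w$, so the recursion (\ref{rec2}) merely divides by the integer $j+1$ at each step, yielding the factorials. Once this is noticed, the induction and the resummation are purely mechanical; the only care needed is the correct bookkeeping of the exceptional $j=1$ term, which carries the extra non-flat piece $\alpha L$.
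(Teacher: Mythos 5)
Your proof is correct: the derivative identity $\alpha^{2}w'(\alpha)=w/(1+w)$ follows from (\ref{Lampop}) exactly as you say, the induction closes, and the resummation and the substitution $e^{L}-1=-(1+q^{2}/\mu^{2})$ reproduce the stated self-energy. The route differs from the paper's in the key step, however. Where you differentiate the Lambert W function and verify by direct calculus that $(\alpha\partial_{\alpha}-1)(\alpha w)=w/(1+w)$, so that multiplication by $\gamma_{1}=\alpha(1+w)$ restores $\alpha w$, the paper never differentiates $W$ at all: it rewrites the ODE (\ref{ode}) as $\gamma(\alpha\partial_{\alpha}-1)\gamma=\gamma-\alpha$ to settle the base case, and in the induction step substitutes $\alpha W=\gamma-\alpha$, kills the linear term with $(\alpha\partial_{\alpha}-1)\alpha=0$, and lands back on $\gamma(\alpha\partial_{\alpha}-1)\gamma=2\gamma_{2}=\alpha W$. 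The paper's argument is thus closed-form-agnostic: it really proves $j!\,\gamma_{j}=\gamma-\alpha$ for all $j\geq 2$ for \emph{any} solution of (\ref{ode}), identifying $\gamma-\alpha=\alpha W$ only through the explicit formula (\ref{anoDim}) at the end. Your version, by contrast, needs the Lambert W solution from the outset, but it buys something the paper takes as given: your computation $\gamma_{1}(\alpha\partial_{\alpha}-1)\gamma_{1}=\alpha w$ together with $\gamma_{1}-\alpha w=\alpha$ is precisely the verification that (\ref{anoDim}) solves (\ref{ode}) --- the ``easy exercise'' of Section 3 --- so your proof is self-contained where the paper's leans on the ODE as a black box. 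The final resummation, using $\sum_{j\geq 2}L^{j}/j!=e^{L}-1-L$ and the cancellation of the two $\alpha wL$ terms, is identical in both treatments.
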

\begin{proof}
We proceed by induction. First $j=2$. 
\begin{equation}
 2!\gamma_{2}(\alpha)= \gamma(\alpha)(\alpha \partial_{\alpha}-1)\gamma(\alpha) \stackrel{(\ref{ode})}{=} 
\gamma(\alpha) - \alpha = \alpha W(\xi e^{-\frac{1}{\alpha}}).
\end{equation}
Let now $j\geq 2$. Then
\begin{equation}
\begin{split}
(j+1)!\gamma_{j+1} &\stackrel{(\ref{rec2})}{=} \gamma(\alpha \partial_{\alpha}-1)j!\gamma_{j} 
= \gamma(\alpha \partial_{\alpha}-1)\alpha W 
= \gamma(\alpha \partial_{\alpha}-1)(\gamma - c\alpha) \\
&=  \gamma(\alpha \partial_{\alpha}-1)\gamma  
= 2 \gamma_{2} = \alpha W ,
\end{split}
\end{equation}
where we have used $(\alpha\partial_{\alpha}-1)\alpha=0$ in the fourth step. For the self-energy then follows 
\begin{equation}
 \Sigma_{\xi}(\alpha,e^{L}) = \gamma \cdot L = \gamma L + \sum_{j\geq 2} \gamma_{j}L^{j} = \alpha (1+W) L 
+ \sum_{j\geq 2} \frac{1}{j!} \alpha W L^{j} = \alpha L +  \alpha (e^{L}-1) W 
\end{equation}
and thus the result. 
\end{proof}
In the notation of the previous section we set $s=q^{2}/\mu^{2}$ with Minkowski momentum $q \in \mathbb{R}^{1,3}$ and write
\begin{equation}
 \Sigma_{\xi}(\alpha,-s) =  \alpha \log(-s) - g_{\xi}(\alpha,-s).
\end{equation}
To see how the instantonic contribution affects the renormalized propagator, we define 
\begin{equation}
 \Pi_{\xi}(\alpha,-s):=\frac{1}{s[1 - \Sigma_{\xi}(\alpha,-s)]} = \frac{1}{s[1-\alpha \log(-s) + g_{\xi}(\alpha,-s)]}
\end{equation}
and study its properties for $\xi=\xi_{*}$ and in particular how the flat contribution causes this quantity to deviate from its 
instanton-free version $\Pi_{0}(\alpha,-s)=\Pi_{\xi}(\alpha,-s)|_{\xi=0}$.

\subsection*{Spacelike photons} For spacelike photons, the Green's function is real-valued due to $S:=-s>0$ and leads to
the propagator 
\begin{equation}
\Pi_{\xi}(\alpha,S)= -\frac{1}{S[1-\alpha \log(S) + g_{\xi}(\alpha,S)]}.
\end{equation}
To see the flat contribution's impact, we compare this quantity for $\xi=\xi_{*}$ with $\Pi_{0}(\alpha,S)$. 
\begin{figure}[ht] 
\includegraphics[height=7cm]{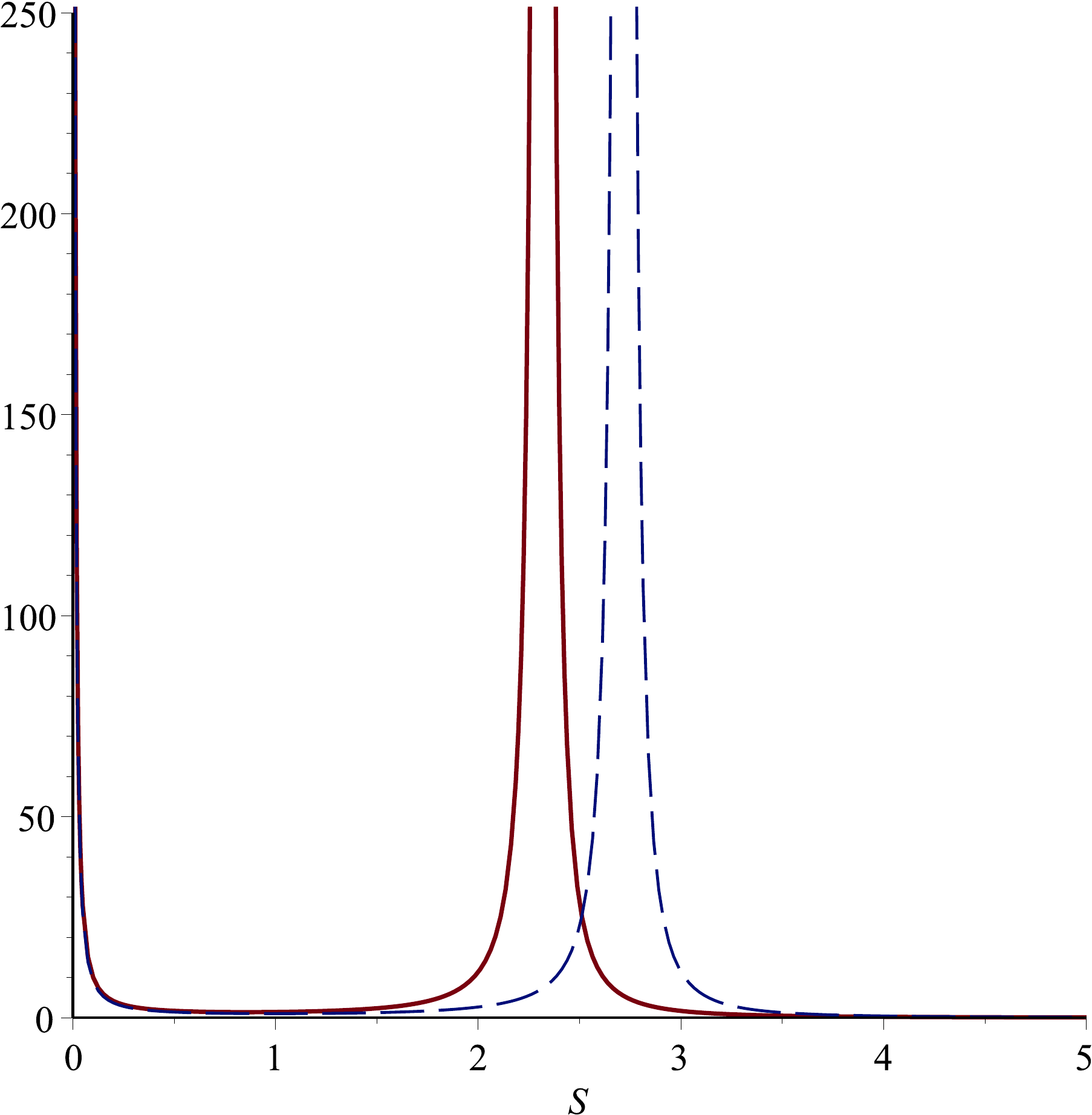}  
\includegraphics[height=7cm]{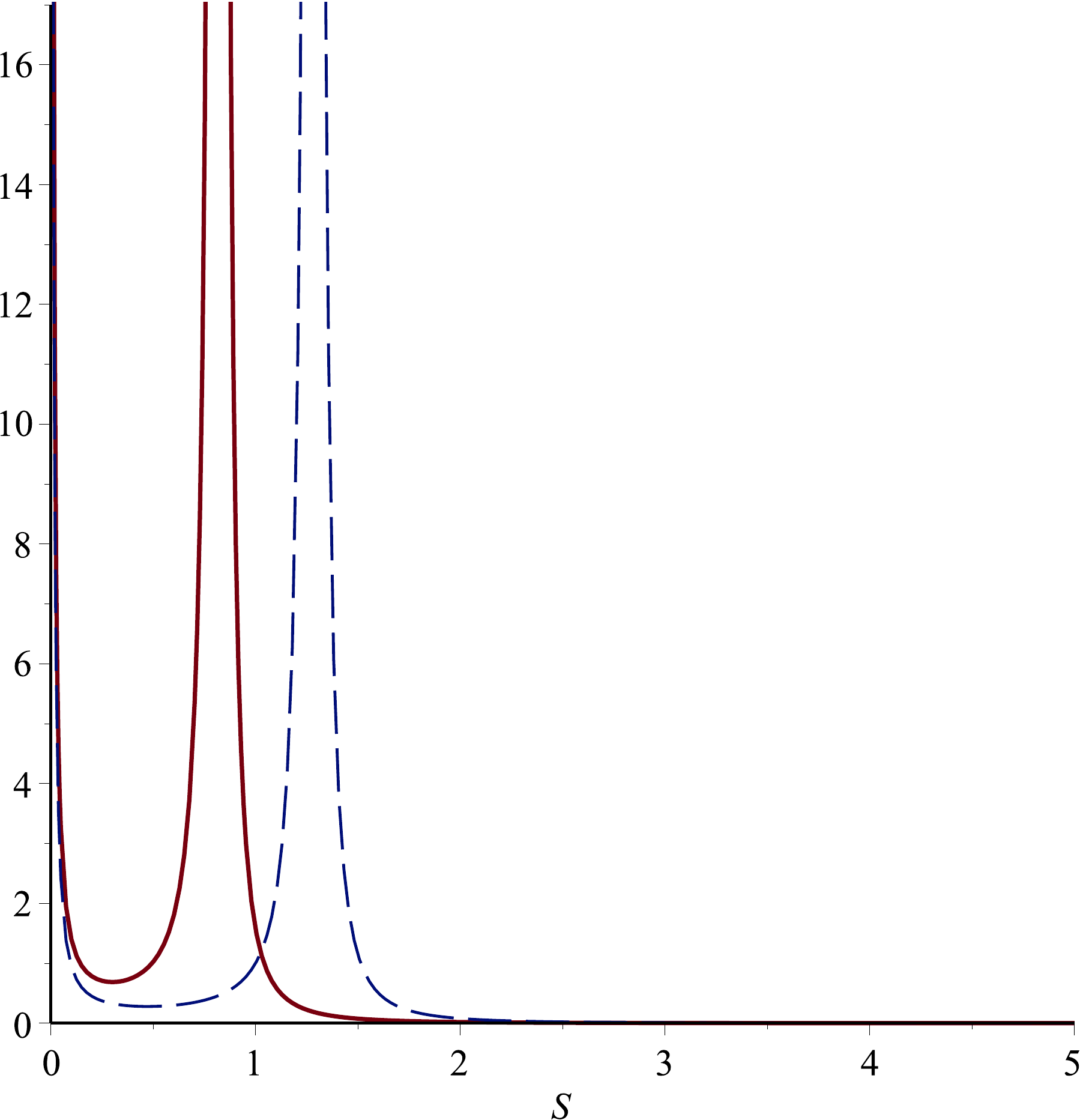}  
\caption{ \small Pole shift of the propagator squares $|\Pi_{0}(\alpha,S)|^{2}$ (dashed line), $|\Pi_{\xi}(\alpha,S)|^{2}$ 
(solid line) for $\xi=\xi_{*}$ and spacelike photons at $\alpha=1$ (left) and $\alpha=4$ (right). } 
\label{Reson} 
\end{figure}
Figure \ref{Reson} has plots of the squares $|\Pi_{0}(\alpha,S)|^{2}$ and $|\Pi_{\xi}(\alpha,S)|^{2}$ displaying two aspects: 
firstly, the propagator exhibits 
a pole which is situated at higher momenta in the weak coupling regime than in the strong coupling regime. 
Secondly, the instantonic contribution causes a \emph{pole shift} towards lower momenta, where this effect is more
pronounced at larger and negligible at lower values of the coupling.      
However, since these poles are those of a toy model, we refrain from any interpretation.

\subsection*{Timelike photons: K\"allén-Lehmann spectral function}\label{Kaleh}

For timelike photons, where $-q^{2}<0$ and thus $-s<0$, we ask for the spectral function $\rho_{\xi}(\alpha,\omega)$ in the 
K\"allén-Lehmann spectral form of the propagator  
\begin{equation}\label{SpecForm}
 \Pi_{\xi}(\alpha,-s)=\frac{1}{s[1 - \Sigma_{\xi}(\alpha,-s)]} = \frac{1}{s} + 
\int_{0}^{\infty} d\omega \   \left(\frac{1}{s-\omega}-\frac{1}{1-\omega} \right)  \rho_{\xi}(\alpha,\omega),
\end{equation}
where the integrand has been chosen so as to warrant the renormalization condition $s\Pi(\alpha,-s)|_{s=1}=1$. 
To extract the spectral function, we compute the limit 
\begin{equation}
 \lim_{\varepsilon \downarrow 0} \left\{\Pi_{\xi}(\alpha,-x-i\varepsilon) - \Pi_{\xi}(\alpha,-x+i\varepsilon) \right\}
= - 2 \pi i \rho_{\xi}(\alpha,x)
\end{equation}
for $x>0$ and obtain the K\"allén-Lehmann spectral function 
\begin{equation}\label{KaellLeh}
\rho_{\xi}(\alpha,\omega) =  \frac{\alpha }{\omega }
 \frac{1}{[1- \alpha \ln\omega +  g_{\xi}(\alpha,-\omega)]^{2} + (\alpha \pi)^{2} }.
\end{equation}
Figure \ref{KLS} shows a plot of the spectral function $\rho(\alpha,\omega):=\rho_{\xi_{*}}(\alpha,\omega)$ for the separatrix solution at 
different coupling strengths $\alpha$.  
\begin{figure}[ht]
\begin{center} \includegraphics[height=8cm]{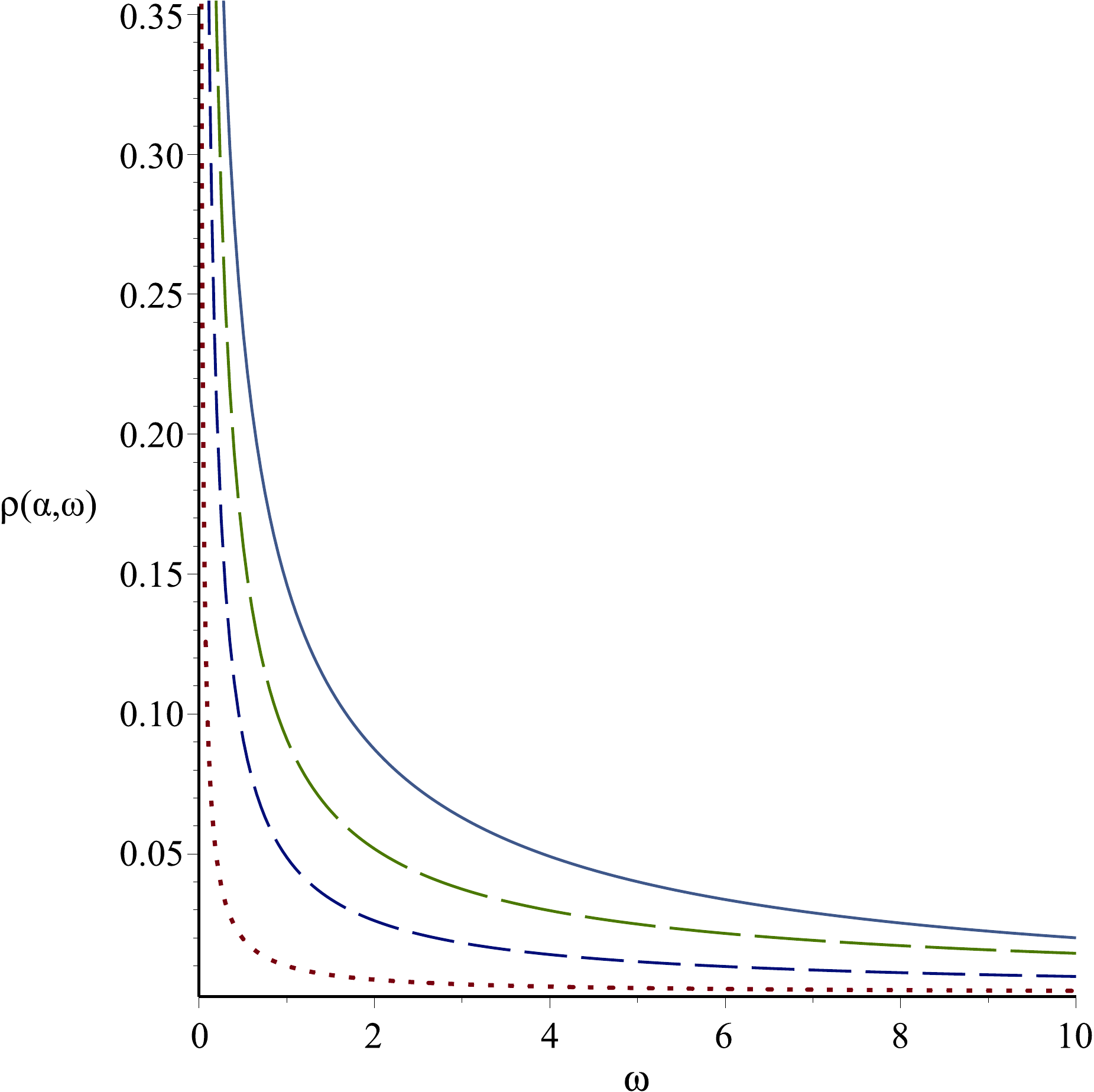}  \end{center} 
\caption{\small Spectral function for various coupling strengths: $\alpha=0.01$(dots), $\alpha=0.05$(dashes), $\alpha=0.1$(long dashes) 
and $\alpha=0.5$(solid). }
\label{KLS}
\end{figure}
Notice that on account of the primitive
\begin{equation}\label{primi}
\int \frac{d\omega}{\omega \ln^{2}\omega} = -\frac{1}{\ln \omega} 
\end{equation}
the dispersion integral in (\ref{SpecForm}) has no trouble converging for $s\neq 0$, at neither integration bound.  
However, apart from the fact that we are dealing with a toy model here, we have considered \emph{massless} QED, 
and can thus not expect our spectral function to encapsulate any valid physics below the pair-creation threshold 
$\omega_{0}\approx 4m^{2}$.  

\subsection*{Instantonic contribution} To again see the alterations brought about by the non-perturbative contribution, we compare
the spectral function $\rho(\alpha,\omega)$ with its instanton-free version $\rho_{0}(\alpha,\omega)$. We find that the flat part 
\begin{equation}
 g_{\xi}(\alpha,-\omega)= \alpha (1+\omega) W(\xi e^{-\frac{1}{\alpha}}) 
\end{equation}
does only play a role at large enough couplings. As the diagrams of Figure \ref{rho} show for $\alpha=5$, the flat contribution leads to 
a slightly increased contribution of lower mass states. For large masses $\omega$ there is only a small change towards a smaller 
contribution.   

\begin{figure}[ht] 
\includegraphics[height=7cm]{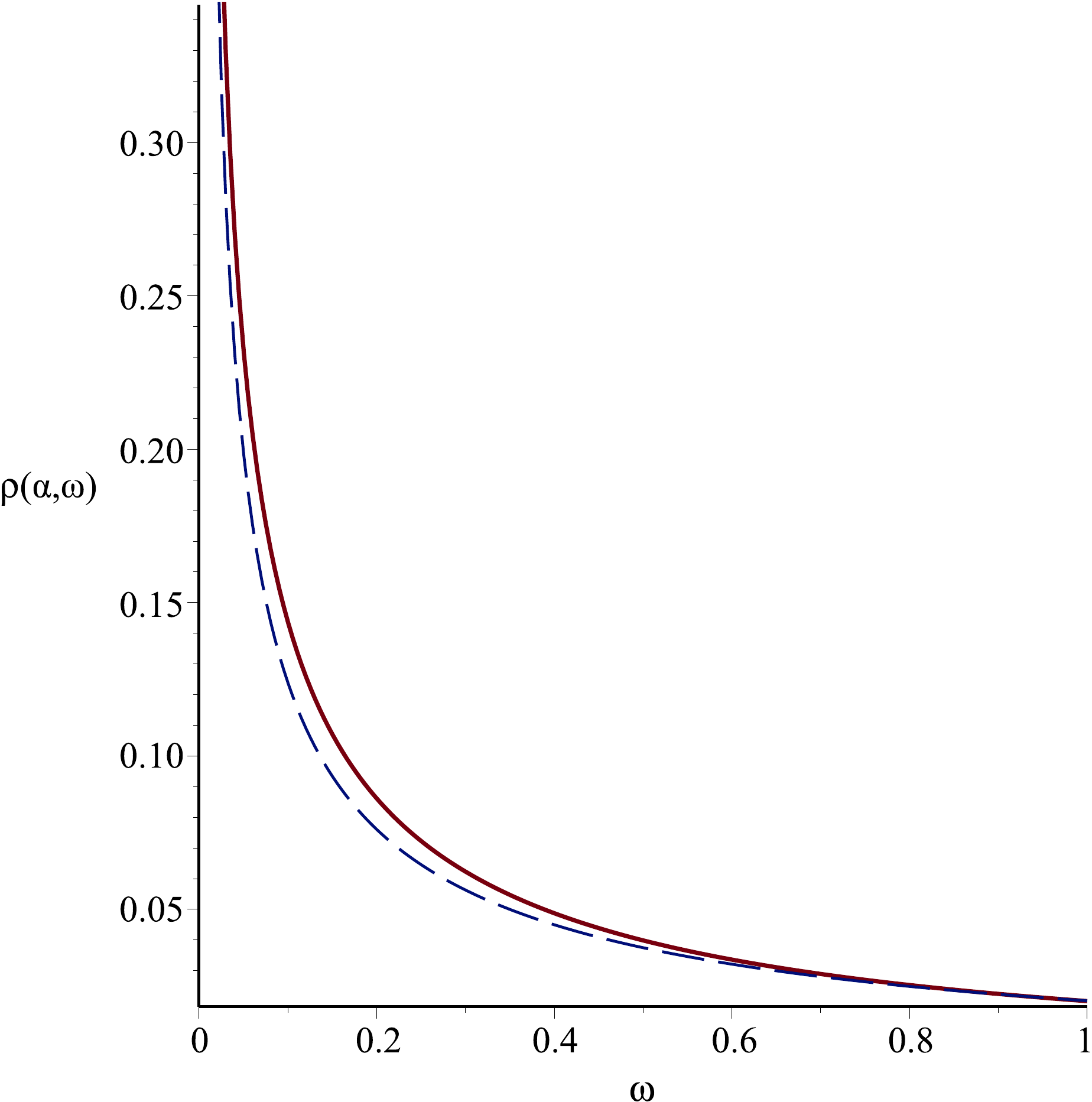}  
\includegraphics[height=7cm]{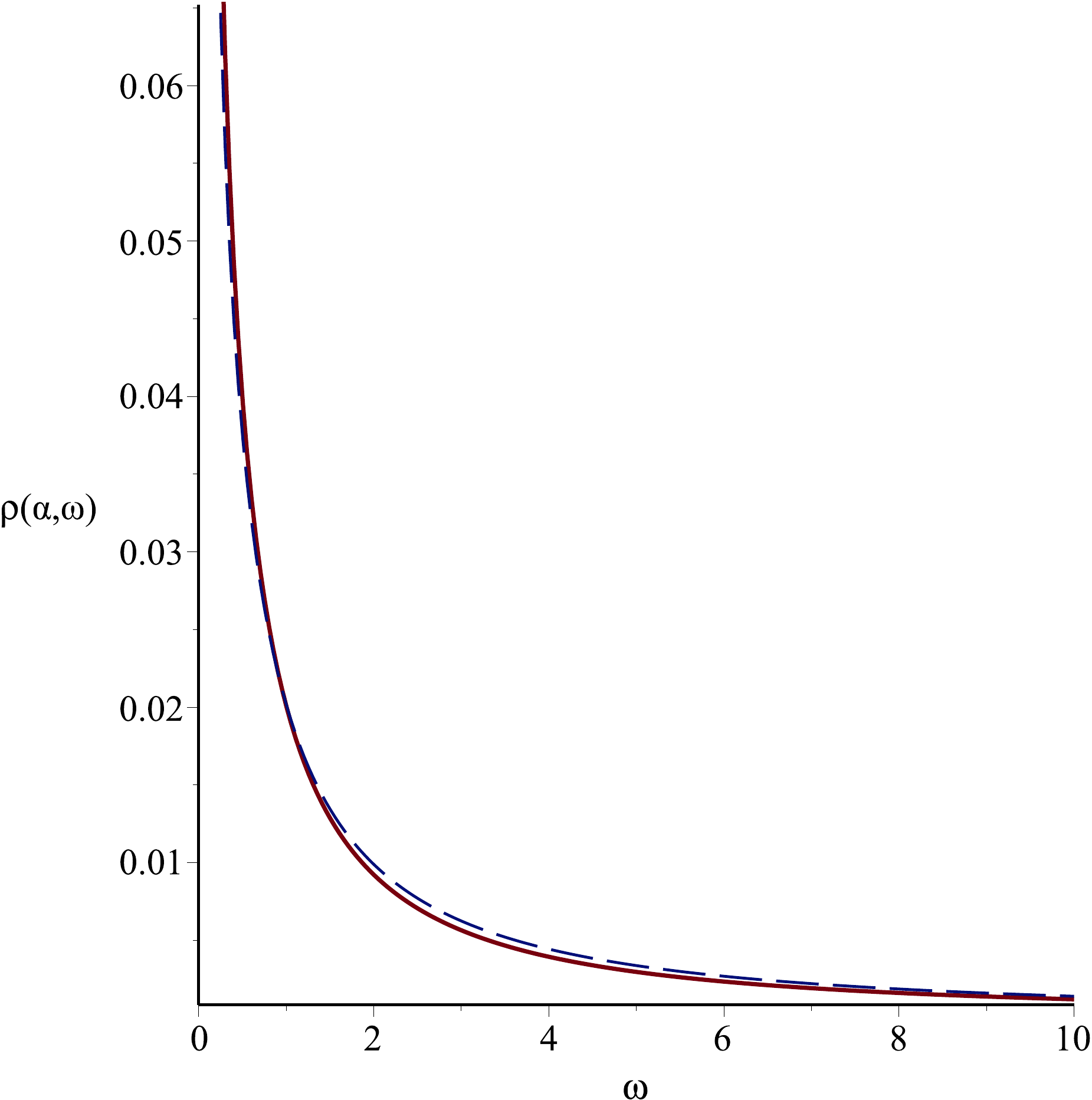}  
\caption{ \small The two spectral functions $\rho_{0}(\alpha,\omega)$ (dashed line), $\rho(\alpha,\omega)$ (solid line) for 
$\alpha=5$ and small/large mass contributions (diagram on the left/right). } 
\label{rho} \end{figure}

However, for large masses $\omega$ the function $g_{\xi}(\alpha,-\omega)$ dominates over the logarithmic 
part in the denominator of $\rho(\alpha,\omega)$ in (\ref{KaellLeh}) and suppresses higher mass contributions much more than the 
logarithmic contribution by itself.   
Interestingly enough, due to the fact that $g_{\xi}(\alpha,-\omega)$ will dominate over any polynomial in the variable $L=\ln \omega$ 
for large $\omega$, this picture would not change qualitatively if we took higher loop contributions into account.

\section{Conclusion}

We have considered a non-standard perturbative approach to approximate the QED $\beta$-function and the photon propagator in which
QED is reduced to a single non-linear ODE for the anomalous dimension. This differential equation proved to habour a sufficient 
criterion for the existence of a Landau pole and thus a necessary condition for the possibility of QED to be free of such ailment. 
A non-perturbative contribution of the instantonic type, 
emerging already at first loop order, proved to be capable of hampering the growth of the $\beta$-function. It is important to 
note that such contributions may determine the asymptotic behaviour of the $\beta$-function so as to possibly exclude 
the existence of a Landau pole. Investigating the impact of the instantonic contribution on both the running coupling and the Green's
function, we have found deviations from the standard instanton-free solutions at larger couplings.   

\section{Acknowledgments}

D.K. is supported by an Alexander von Humboldt Professorship from the Alexander von Humboldt Foundation and the BMBF. D.K. and L.K. thank
David Broadhurst and John Gracey for interesting and helpful discussions. L.K. would like to express his gratitude to
Karen Yeats for explaining and clarifying a number of issues. He owes special thanks to Erik Panzer for extensive discussions and 
illuminating remarks.


\begin{thebibliography}{9}

\bibitem[Ben99]{Ben99} M.Beneke: \emph{Renormalons}, Phys. Rep. 317 (1999)

\bibitem[DelKaTh96]{DelKaTh96} R. Delbourgo, A.C.Kalloniatis, G. Thomson: \emph{Dimensional Renormalization: Ladders and rainbows}, Phys. Rev. D
(1996), 5373 

\bibitem[DelEM97]{DelEM97} R. Delbourgo, D. Elliott, D.S. McAlly: \emph{Dimensional Renormalization in $\phi^{3}$ theory: Ladders and 
rainbows}, Phys. Rev. D (1997), 5230 

\bibitem[Dys51]{Dys51} F. J. Dyson : \emph{Divergence of Perturbation Theory in Quantum Electrodynamics}, Phys. Rev. 85 (1951), 631

\bibitem[FaSi97]{FaSi97} S.V. Faleev, P.G. Silvestrov : \emph{Higher order corrections to the Renormalon.} arXiv: hep-th/9610344v2

\bibitem[GoKaLaSu91]{GoKaLaSu91} S.G.Gorishny, A.L.Kataev, S.A.Larin, L.R.Surguladze: \emph{The analytic four-loop corrections to the QED $\beta$-function in the MS scheme and
to the QED $\psi$-function. Total reevaluation.} Phys. Lett. B 256 (1991), 81  

\bibitem[GaKaG98]{GaKaG98} E.Gardi, M.Karliner \& G.Grunberg: \emph{Can the QCD running coupling have a causal analyticity structure?},
JHEP 07 (1998), 007

\bibitem[KY06]{KY06} D.Kreimer, K.Yeats: \emph{An Étude in non-linear Dyson-Schwinger equations}, Nucl.Phys. (Proc.Suppl.) 160 (2006), 
116, arXiv: hep-th/0605096

\bibitem[BKUY09]{BKUY09} G. van Baalen, D.Kreimer, D.Uminsky, K.Yeats: \emph{The QED $\beta$-function from global solutions to 
Dyson-Schwinger equations}, Ann. Phys. 324 (2009), 205-219, arXiv: hep-th/0805.0826

\bibitem[BKUY10]{BKUY10} G. van Baalen, D.Kreimer, D.Uminsky, K.Yeats: \emph{The QCD $\beta$-function from global solutions to 
Dyson-Schwinger equations}, Ann. Phys. 325 (2010), 205-219, arXiv: hep-th/0805.0826

\bibitem[Nest03]{Nest03} A.V. Nesterenko : \emph{Analytic invariant charge in QCD}, J.Mod.Phys. A 18, No.30 (2003), 5475, 
archiv: hep-th/0308288v2 

\bibitem[PenRo81]{PenRo81} M.R. Pennington, G.G. Ross: \emph{Perturbative QCD for timelike processes: What is the best expansion
parameter?}, Phys. Lett. B 102 (1981), 167

\bibitem[Y11]{Y11} K.Yeats: \emph{Rearranging Dyson-Schwinger equations}, AMS, Vol. 211, 995 (2011), arXiv: 'Growth estimates for 
Dyson-Schwinger equations', hep-th/08102249 (Beware of the different sign conventions!)

\bibitem[Y13]{Y13}
  K.~Yeats, 
  \emph{Some combinatorial interpretations in perturbative quantum field theory,} arXiv:1302.0080 [math-ph].
\end{thebibliography}
\end{document}